\newif\ifdraft
\theoremstyle{plain}
\newtheorem{theorem}{Theorem}[section]
\newtheorem{proposition}[theorem]{Proposition}
\newtheorem{claim}{Claim}
\theoremstyle{definition}
\newcommand{\If}{\lq\lq$\Leftarrow$\rq\rq\ \ }           
\newcommand{\OnlyIf}{\lq\lq$\Rightarrow$\rq\rq\ \ }      
\newcommand{\myi}{(\emph{i})\xspace}
\newcommand{\myii}{(\emph{ii})\xspace}
\newcommand{\myiii}{(\emph{iii})\xspace}
 \newcommand{\B}{\mathcal{B}}
\newcommand{\E}{\mathcal{E}} \newcommand{\F}{\mathcal{F}}
 \renewcommand{\L}{\mathcal{L}}
 \newcommand{\T}{\mathcal{T}}
 \newcommand{\V}{\mathcal{V}}
 \newcommand{\X}{\mathcal{X}}
\newcommand{\Tau}{\T}
\newcommand{\Eta}{\E}
\newcommand{\ra}{\rightarrow}
\newcommand{\lora}{\longrightarrow}
\newcommand{\EXPTIME}{\textsc{ExpTime}}
\newcommand{\lang}{\mathscr{L}} 
\newcommand{\limp}{\rightarrow}
\newcommand{\incl}{\subseteq}
\newcommand{\cert}[1][Q,\V]{\mathit{cert}_{#1}}
\newcommand{\Nat}{{\mathbb N}}
\newcommand{\conc}{\mathord{\cdot}}
\newcommand{\true}{\textbf{true}\xspace}
\newcommand{\false}{\textbf{false}\xspace}
\newcommand{\eword}{\varepsilon}
\newcommand{\At}{\mathbf{A}}
\newcommand{\Atwf}{\mathbf{A}^\mathit{wf}}
\newcommand{\XPath}{\textit{XPath}\xspace}
\newcommand{\RXPath}{\textit{RXPath}\xspace}
\newcommand{\CXPath}{\textit{CXPath}\xspace}
\newcommand{\CoreXPath}{\textit{CoreXPath}\xspace}
\newcommand{\muXPath}{\ensuremath{\mu}\textit{XPath}\xspace}
\newcommand{\DIAM}[2]{\langle #1 \rangle #2}
\newcommand{\BOX}[2]{[#1]#2}
\newcommand{\TEST}[1]{#1?}
\newcommand{\TRUE}{\ensuremath{\mathsf{true}}\xspace}
\newcommand{\FALSE}{\ensuremath{\mathsf{false}}\xspace}
\newcommand{\uu}{\ensuremath{\mathbf{u}}\xspace}
\newcommand{\lfp}{\mathsf{lfp}}
\newcommand{\gfp}{\mathsf{gfp}}
\newcommand{\LFP}[1]{\lfp{\{#1\}}}
\newcommand{\GFP}[1]{\gfp{\{#1\}}}
\newcommand{\Intval}[3]{#3_{#2}^{#1}}
\newcommand{\INTVAL}[3]{(#3)_{#2}^{#1}}
\newcommand{\SigmaA}{\Sigma_a}
\newcommand{\Ts}{T_s}
\newcommand{\query}{q}
\newcommand{\Nominal}[1]{\mathit{N_{#1}}}
\newcommand{\achild}{\ensuremath{\mathsf{child}}\xspace}
\newcommand{\apar}{\ensuremath{\mathsf{parent}}\xspace}
\newcommand{\aright}{\ensuremath{\mathsf{right}}\xspace}
\newcommand{\aleft}{\ensuremath{\mathsf{left}}\xspace}
\newcommand{\afchild}{\ensuremath{\mathsf{fchild}}\xspace}
\newcommand{\ifc}{\mathit{ifc}}
\newcommand{\hfc}{\mathit{hfc}}
\newcommand{\irs}{\mathit{irs}}
\newcommand{\hrs}{\mathit{hrs}}
\newcommand{\ini}{s_\mathit{ini}}
\newcommand{\str}{s_\mathit{struc}}
\newcommand{\CL}{\mathit{CL}}
\newcommand{\nnf}{\mathit{nnf}}
\newcommand{\Dom}[1][T]{\Delta^{#1}}            
\newcommand{\Int}[2][T]{#2^{#1}}                
\newcommand{\INT}[2][T]{(#2)^{#1}}              
\newcommand{\lab}[1][T]{\ell^{#1}}
\newcommand{\Lab}[2][T]{\lab[#1](#2)}
\newcommand{\Tree}[1][T]{(\Dom[#1],\lab[#1])}
\newcommand{\sta}{\mathit{state}}
\newcommand{\ared}{\ensuremath{\mathtt{red}}\xspace}
\newcommand{\ablue}{\ensuremath{\mathtt{blue}}\xspace}
\keywords{Tree-structured data, XML databases, fixpoint logics, query
 evaluation, query containment, weak alternating tree automata}
\begin{document}

\title{Fixpoint Node Selection Query Languages for Trees}

\author[D.~Calvanese]{Diego Calvanese}
\address{Free University of Bozen-Bolzano\\
 Faculty of Computer Science\\
 Piazza Domenicani 3\\
 39100 Bolzano, Italy
}
\email{calvanese@inf.unibz.it}

\author[G.~De Giacomo]{Giuseppe De Giacomo}
\address{Sapienza Universit\`a di Roma\\
 Dipartimento di Ingegneria Informatica, Automatica e Gestionale\\
 Via Ariosto 25\\
 00185 Rome, Italy
}
\email{degiacomo@diag.uniroma1.it}

\author[M.~Lenzerini]{Maurizio Lenzerini}
\address{Sapienza Universit\`a di Roma\\
 Dipartimento di Ingegneria Informatica, Automatica e Gestionale\\
 Via Ariosto 25\\
 00185 Rome, Italy
}
\email{lenzerioni@diag.uniroma1.it}

\author[M.Y.~Vardi]{Moshe Y. Vardi}
\address{Rice University\\
 Department of Computer Science\\
 S.\ Main Street\\
 77005-1892 Houston, TX, U.S.A.
}
\email{vardi@cs.rice.edu}


\begin{abstract}
  The study of node selection query languages for (finite) trees has been a
  major topic in the research on query languages for Web documents. On
  one hand, there has been an extensive study of XPath and its various
  extensions. On the other hand, query languages based on classical logics,
  such as first-order logic (FO) or Monadic Second-Order Logic (MSO), have been
  considered. Results in this area typically relate an XPath-based language to
  a classical logic. What has yet to emerge is an XPath-related language that
  is as expressive as MSO, and at the same time enjoys the computational
  properties of XPath, which are linear time query evaluation and exponential
  time query-containment test. In this paper we propose \muXPath, which is the
  alternation-free fragment of XPath extended with fixpoint operators. Using
  two-way alternating automata, we show that this language does combine desired
  expressiveness and computational properties, placing it as an attractive
  candidate for the definite node-selection query language for trees.
\end{abstract}

\maketitle

\section{Introduction}
\label{sec:introduction}

XML\footnote{\url{http://www.w3.org/TR/REC-xml/}\label{footnote-xml}}
and JSON\footnote{\url{http://www.json.org/}\label{footnote-json}}
are standard languages for Web documents supporting semistructured
data.  From the conceptual point of view, an XML or JSON document can be seen
as a finite node-labeled tree, and several formalisms have been proposed
as query languages over documents considered as finite trees.

Broadly speaking, there are two main classes of such languages, those focusing
on selecting a set of nodes based on structural properties of the
tree~\cite{NeSc02,Vian01,CMPP15}, and those where the mechanisms for the
selection of the result also take into account node attributes with their
associated values taken from a specified domain, and the relationship between
attribute values for different
nodes~\cite{BMSS09,BDMSS11,Bouy02,DeLa09,NeSV04,KaTa08}.  We focus here on the
former class of queries, which we call \emph{node selection queries}. Many of
such formalisms come from the tradition of modal logics, similarly to the most
expressive languages of the Description Logics family~\cite{BCMNP03}, based on
the correspondence between the tree edges and the accessibility relation used
in the interpretation strcutures of modal
logics. \XPath~\cite{W3Crec-XPath-1.0} is a notable example of such formalisms,
and, in this sense, it can also be seen as an expressive Description Logic over
finite trees. Relevant extensions of \XPath are inspired by the family of
Propositional Dynamic Logic (PDL)~\cite{Prat78}. For example, \RXPath is the
extension of \XPath with binary relations specified through regular expression,
used to formulate expressive navigational patterns over XML
documents~\cite{CDLV09}.

A main line of research on node selection queries has been on identifying nice
computational properties of \XPath, and studying extensions of such language
that still enjoy these properties. An important feature of \XPath is the
tractability of query evaluation in data complexity, i.e., with respect to the
size of the input tree.  In fact, queries in the navigational core \CoreXPath
can be evaluated in time that is linear in the \emph{product} of the size of
the query and the size of the input tree~\cite{GoKP05,BoPa08}. This property is
enjoyed also by various extensions of \XPath, including
\RXPath~\cite{Marx04b}. Another nice computational property of \XPath is that
checking query containment, which is the basic task for static analysis of
queries, is in \EXPTIME~\cite{NeSc03,Schw04}.  This property holds also for
\RXPath~\cite{TeSe08,CDLV09}, and other extensions of \XPath~\cite{TeLu09}.

Another line of research focused on expressive power. Marx has shown that
\XPath is expressively equivalent to FO$^2$, the 2-variable fragment of
first-order logic, while \CXPath, which is the extension of \XPath with
conditional axis relations, is expressively equivalent to full
FO~\cite{Marx04b,Marx05}. Regular extensions of \XPath are expressively
equivalent to extensions of FO with transitive
closure~\cite{Tenc06,TeSe08}. Another classical logic is Monadic Second-Order
Logic (MSO).  This logic is more expressive than FO and its extensions by
transitive closure~\cite{Libk06b,Tenc06,TeSe08}.  In fact, it has been argued
that MSO has the right expressiveness required for Web information extraction
and hence can serve as a yardstick for evaluating and comparing
wrappers~\cite{GoKo04}.  Various logics are known to have the same expressive
power as MSO, cf.~\cite{Libk06b}, but so far no natural extension of \XPath
that is expressively equivalent to MSO and enjoys the nice computational
properties of \XPath has been identified.

A further line of research focuses on the relationship between query languages
for finite trees and tree automata~\cite{LiSi08,Neve02,Schw07}.  Various
automata models have been proposed. Among the cleanest models is that of
node-selecting tree automata, which are automata on finite trees, augmented
with node selecting states~\cite{NeSc02,FrGK03}.  What has been missing in this
line of inquiry is a single automaton model that can be used \emph{both} for
testing query containment and for query evaluation~\cite{Schw07}. For example,
node-selecting tree automata are adequate for query evaluation, but not for,
say, query containment~\cite{Schw07}.

The goal of this paper\footnote{This paper is based on two earlier conference
 papers~\cite{CDLV09,CDLV10}, which did not contain constructions, algorithms,
 and proofs in full detail.}  is dual. We first introduce a comprehensive
automata-theoretic framework for both query evaluation and reasoning about
queries.  The framework is based on \emph{two-way weak alternating tree
 automata}, denoted 2WATAs~\cite{KuVW00}, but specialized for finite trees,
while allowing for infinite runs.  Our main result here is that 2WATAs enable
linear-time query evaluation and exponential time query containment.  This
solves the problem mentioned above of identifying an automaton model that can
be used both for testing query containment and for query
evaluation~\cite{Schw07}.

We then introduce a declarative query language, namely \muXPath, based on
\XPath enriched with \emph{alternation-free} fixpoint operators, which
preserves these nice computational properties.  The significance of this
extension is due to a further key result of this paper, which shows that on
\emph{finite} trees alternation-free fixpoint operators are sufficient to
capture all of MSO, which is considered to be the benchmark query language on
tree-structured data \cite{GoKo04}.  Alternation freedom implies that the least
and greatest fixpoint operators do not interact, and is known to yield
computationally amenable logics~\cite{BCMDH92,KuVW00}.  It is also known that
unfettered interaction between least and greatest fixpoint operators results in
formulas that are very difficult for people to comprehend, cf.~\cite{Koze83}.

Fixpoint operators have been studied in the $\mu$-calculus, interpreted over
arbitrary structures~\cite{Koze83}, which by the tree-model property of this
logic, can be restricted to be interpreted over infinite trees,
where it is known that the $\mu$-calculus is equivalent to MSO,
cf.~\cite{EmJu91,Niwi97}.  It is also known that, to obtain the full expressive
power of MSO on infinite trees, arbitrary alternations of fixpoints are
required in the $\mu$-calculus (see, e.g.,~\cite{GrTW02}).  Forms of
$\mu$-calculus have also been considered in Description
Logics~\cite{DeLe94b,Schi94,KuSV02,BLMV08}, again interpreted over infinite
trees.  In this context, the present work can provide the foundations for a
description logic tailored towards acyclic finite (a.k.a.\ well-founded) frame
structures.  In this sense, the present work overcomes~\cite{CaDL99c}, where an
explicit well-foundedness construct was used to capture XML in description
logics.

In a finite-tree setting, extending \XPath with various forms of fixpoint
operators
has been studied earlier~\cite{AGMRT08,BaLi05,Tenc06,GeLS07,GeLa10,Libk06b}.
While for arbitrary fixpoints the resulting query language is equivalent
to MSO and has an exponential-time containment test, it is not known to have a
linear-time evaluation algorithm.  In contrast, as \muXPath is alternation free
it is closely related to a stratified version of Monadic Datalog proposed as a
query language for finite trees in~\cite{GoKo04,FrGK03}, which enjoys
linear-time evaluation.  Note, however, that the complexity of containment of
stratified Monadic Datalog is unknown,
though the techniques developed here may be useful in settling that open
question.

We prove here that there is a very direct correspondence between \muXPath and
2WATAs. Specifically, there are effective translations from \muXPath queries to
2WATAs and from 2WATAs to \muXPath.  We show that this yields the nice
computational properties for \muXPath.  We then prove the equivalence of 2WATAs
to node-selecting tree automata (NSTA), shown to be expressively equivalent to
MSO~\cite{FrGK03}. On the one hand, we have an exponential translation from
2WATAs to NSTAs. On the other hand, we have a linear translation from NSTAs to
2WATAs.  This yields the expressive equivalence of \muXPath to MSO.

It is worth noting that the automata-theoretic approach of 2WATAs is based on
techniques developed in the past 20 years in the context of program
logics~\cite{KuVW00,Vard98}.  Here, however, we leverage the fact that we are
dealing with \emph{finite} trees, rather than \emph{infinite} trees that are
usually used in the program-logics context.  Indeed, the automata-theoretic
techniques used in reasoning about infinite trees are notoriously difficult and
have resisted efficient
implementation~\cite{FrLa10,ScTW05,TaHB95}.  The restriction to finite trees
here enables one to obtain a much more feasible algorithmic approach.  In
particular, one can make use of symbolic techniques, at the base of modern
model checking tools~\cite{BCMDH92}, for effectively querying and verifying XML
documents.  It is worth noting that while 2WATAs run over finite trees they are
allowed to have infinite runs.  This separates 2WATAs from the alternating
finite-tree automata used elsewhere~\cite{CGKV88,Slut85}.
Thus, the technical contribution here is not the development of completely new
automata-theoretic techniques, but the adaptation and simplification of known
techniques to the setting of finite-trees.
The constructions and algorithms for this adaptation are presented here
with minimal references to prior automata-theoretic work.

The paper is organized as follows. In Section~\ref{sec:muxpath} we present
syntax and semantics of \muXPath, as well as examples of queries expressed in
this language. In Sections~\ref{sec:muXPath-2WATA} and~\ref{sec:2wata} we show
how to make use of two-way automata over finite trees as a formal tool for
addressing query evalation and query containemnt in the context of
\muXPath. More specifically, in Section~\ref{sec:muXPath-2WATA} we introduce
the class of two-way weak alternating tree automata, and devise mutual
translation between them and \muXPath queries, and in Section~\ref{sec:2wata}
we provide algorithms for deciding the acceptance and non-emptiness problems
for 2WATAs.  Section~\ref{sec:2wata-mso} deals with the expressive power of
\muXPath, by establishing the relationship between two-way weak alternating
tree automata and MSO.  In Section~\ref{sec:reasoning} we exploit the
correspondence between two-way weak alternating tree automata and \muXPath to
illustrate the main characteristics of \muXPath as a query language over finite
trees.  Finally, Section~\ref{sec:conclusion} concludes the paper.


\section{The Query Language \muXPath}
\label{sec:muxpath}

In this paper we are concerned with query languages over tree-structured data,
which is customary in the XML \cite{Marx04b,Marx05} and more recently
JSON \cite{BRSV17,HiPV17,BCCX18} format.  More precisely, we consider databases
as finite sibling-trees, which are tree structures whose nodes are linked to
each other by two relations: the child relation, connecting each node with its
children in the tree; and the immediate-right-sibling relation, connecting each
node with its sibling immediately to the right in the tree.  Such a relation
models the order between the children of the node in a 
document.
Each node of the sibling tree is labeled by (possibly many) elements of a fixed set
$\Sigma$ of atomic propositions
that represent either 
tags or 
attribute-value pairs.
Observe that in general sibling trees are more general than XML documents since
they would allow the same node to be labeled by several tags.

Formally, a (finite) tree is a complete prefix-closed non-empty (finite) set of
words over $\Nat_+$, i.e., the set of positive natural numbers.  In other words,
a \emph{(finite) tree} is a (finite) set of words $\Delta\incl\Nat_+^*$, such
that if $x\conc i\in\Delta$, where $x\in\Nat^*$ and $i\in\Nat_+$, then also
$x\in\Delta$.  The elements
of $\Delta$ are called \emph{nodes}, the empty word $\eword$ is the \emph{root}
of $\Delta$, and for every $x\in\Delta$, the nodes $x\conc i$, with $i\in\Nat$,
are the \emph{children} of $x$.  By convention we take $x\conc 0=x$, and
$x\conc i\conc{-1}=x$.  The \emph{branching degree} $d(x)$ of a node $x$
denotes the number of children of $x$.  If the branching degree of all nodes
of a tree is bounded by $k$, we say that the tree is \emph{ranked} and has
branching degree $k$.  In particular, if the branching degree is~2, we say that
the tree is \emph{binary}.  Instead, if the number of children of the nodes
is a priori unbounded, we say that the tree is \emph{unranked}.
A \emph{(finite) labeled tree} over an alphabet $\L$ of labels is a pair
$T=\Tree[T]$, where $\Dom[T]$ is a (finite) tree and the labeling
$\ell^T:\Dom[T]\ra\L$ is a mapping assigning to each node $x\in\Dom[T]$ a label
$\Lab{x}$ in $\L$.

A \emph{sibling tree} $T$ is a finite labeled unranked tree $\Delta$
with the additional requirement that if node $x\conc (i+1)\in\Delta$,
then also node $x\conc (i)\in\Delta$.Moreover, nodes of sibling trees
are labeled with a set of atomic propositions in an alphabet $\Sigma$,
i.e., $\L=2^{\Sigma}$.  Given $A\in\Sigma$, we denote by $\Int[T]{A}$
the set of nodes $x$ of $\Dom[T]$ such that $A\in\Lab{x}$.  It is
customary to denote a sibling tree $T$ by $(\Dom[T],\Int[T]{\cdot})$.
On sibling trees, two auxiliary binary relations between nodes, and their
inverses are defined ($i>0$):
\[
  \begin{array}[t]{lcl}
    \Int[T]{\achild}  &=& \{(z,z\conc i) \mid z, z\conc i\in\Dom[T]\}\\
    \Int[T]{(\achild^-)}  &=& \{(z\conc i,z) \mid z, z\conc i\in\Dom[T]\}\\
    \Int[T]{\aright}  &=&
      \{(z\conc i,z\conc (i{+}1)) \mid z\conc i, z\conc (i{+}1)\in\Dom[T]\}\\
    \Int[T]{(\aright^-)}  &=&
      \{(z\conc (i{+}1),z\conc i) \mid z\conc i, z\conc (i{+}1)\in\Dom[T]\}
  \end{array}
\]
The relations \achild and \aright are called \emph{axes}.

\medskip

One of the core languages used to query tree-structured data is \XPath, whose
definition we briefly recall here.  An \XPath \emph{node expression} $\varphi$
is defined by the following syntax, which is inspired by Propositional Dynamic
Logic (PDL) \cite{FiLa79,ABDG*05,CDLV09}:
\begin{eqnarray*}
  \varphi &\lora& A                         \mid
                  \lnot \varphi             \mid
                  \varphi_1 \land \varphi_2 \mid
                  \varphi_1 \lor \varphi_2  \mid
                  \DIAM{P}{\varphi}         \mid
                  \BOX{P}{\varphi}          \\
  P &\lora&  \achild        \mid
             \aright        \mid
             \achild^-      \mid
             \aright^-
\end{eqnarray*}
where $A$ denotes an atomic proposition belonging to an alphabet $\Sigma$,
$\achild$ and $\aright$ denote the main atomic relations between nodes in a
tree, usually called \emph{axis} relations.  The expressions $\achild^-$ and
$\aright^-$ denote their inverses, which in fact correspond to the other two
standard \XPath axes $\apar$ and $\aleft$, respectively.  Intuitively, a node
expression is a formula specifying a property of nodes, where an atomic
proposition $A$ asserts that the node is labeled with $A$, negation,
conjunction, and disjunction have the usual meaning, $\DIAM{P}{\varphi}$, where
$P$ is one of the axes, denotes that the node is connected via $P$ with a node
satisfying $\varphi$, and $\BOX{P}{\varphi}$ asserts that all nodes connected
via $P$ satisfy $\varphi$.  We also adopt the usual abbreviations for booleans,
i.e., $\TRUE$, $\FALSE$, and $\varphi_1\limp\varphi_2$.

The query language studied in this paper, called \muXPath is an extension of
\XPath with a mechanism for defining sets of nodes by means of explicit
fixpoint operators over systems of equations.  \muXPath is essentially the
Alternation-Free $\mu$-Calculus, where the syntax allows for the fixpoints to
be defined over vectors of variables~\cite{EmLe86}.

To define \muXPath queries, we consider a set $\X$ of variables, disjoint from
the alphabet $\Sigma$.  An \emph{equation} has the form
\[
  X \doteq \varphi
\]
where $X\in\X$, and $\varphi$ is an \XPath node expression having as atomic
propositions symbols from $\Sigma\cup\X$.  We call the left-hand side of the
equation its \emph{head}, and the right-hand side its \emph{body}.
A set of equations can be considered as mutual fixpoint equations, which can
have multiple solutions in general. We are actually interested in two
particular solutions: the smallest one, i.e., the least fixpoint (lfp), and the
greatest one, i.e., the greatest fixpoint (gfp), both of which are guaranteed
to exist under a suitable syntactic monotonicity condition to be defined below.
Given a set of equations
\[
  \{X_1\doteq\varphi_1,\dots,X_n\doteq\varphi_n\},
\]
where we have one equation with $X_i$ in the head, for $1\leq i\leq n$, a
\emph{fixpoint block} has the form
$\mathit{fp}{\{X_1\doteq\varphi_1,\dots,X_n\doteq\varphi_n\}}$, where
\begin{itemize}
\item $\mathit{fp}$ is either $\lfp$ or $\gfp$, denoting respectively the least
  fixpoint and the greatest fixpoint of the set of equations, and
\item each variable $X_i$, for $1\leq i\leq n$, appears \emph{positively} in
  $\varphi_i$, i.e., it appears within the scope of an even number of negations.
\end{itemize}
The latter condition, called \emph{syntactic monotonicity}, syntactically
guarantees monotonicity, and hence, the existence of least and greatest
fixpoints (see \cite{Tars55,Koze83}).  We say that the variables
$X_1,\dots,X_n$ are \emph{defined} in the fixpoint block
$\mathit{fp}{\{X_1\doteq\varphi_1,\dots,X_n\doteq\varphi_n\}}$.

A \emph{\muXPath query} has the form $X:\F$, where $X\in\X$ and $\F$ is a set
of fixpoint blocks such that:
\begin{itemize}
\item each variables occurring in $X:\F$, including $X$ is defined in exactly one  of the fixpoint blocks of $\F$;
\item there exists a partial order $\preceq$ on the fixpoint blocks in $\F$
  such that, for each $F_i\in\F$, the bodies of equations in $F_i$ contain only
  variables defined in fixpoint blocks $F_j\in\F$ with $F_j\preceq F_i$.
\end{itemize}

The meaning of a query $q$ of the form $X:\F$ is based on the fact that, when
evaluated over a tree $T$, $\F$ assigns to each variable defined in it a set of
nodes of $T$, and that $q$ returns as result the set assigned to $X$.  We
intuitively explain the mechanism behind the assignment of $\F$ to its
variables.  We choose partial order $\preceq$ on the fixpoint blocks in $\F$
respecting the conditions above, and we operate one block of equations at a
time according to $\preceq$.
For each fixpoint block, we compute the solution of the corresponding
equations, obviously taking into account the type of fixpoint, and using the
assignments for the variables already computed for previous blocks.
We come back to the formal semantics below, and first give some examples of
\muXPath queries.

The following query computes the nodes reaching a $\ared$ node on all
$\achild$-paths (possibly of length 0), exploiting the encoding of transitive
closure by means of a least fixpoint:
\[
  X:\{\LFP{X\doteq \ared \lor \DIAM{\achild}{\TRUE}\land\BOX{\achild}{X}}\}.
\]

As another example, to obtain the nodes such that all their descendants (including the
node itself) are not simultaneously $\ared$ and $\ablue$, we can write the
query:
\[
  X:\{\GFP{X\doteq (\ared \limp \lnot\ablue) \land \BOX{\achild}{X}}\}.
\]
Notice that such nodes are those that do not have descendants that are
simultaneously $\ared$ and $\ablue$.  The latter set of nodes is characterized
by a least fixpoint, and therefore query $q$ can also be considered as the
negation of such least fixpoint.

We now illustrate an example where both a least and a greatest
fixpoint block are used in the same query.  Indeed, to compute $\ared$
nodes such that all their $\ared$ descendants have only $\ablue$
children (if any) and such that all their $\ablue$ descendants have at
least a $\ared$ child, we can use the following query:
\[
  X_1:\{
  \begin{array}[t]{@{}l}
    \gfp\{X_0 \doteq
       \begin{array}[t]{l}
         (\ared\limp \BOX{\achild}{\ablue}) \land {}\\
         (\ablue\limp \DIAM{\achild}{\ared}) \land \BOX{\achild}{X_0} \}\},
       \end{array}\\
    \LFP{X_1 \doteq \ared\land X_0}.
  \end{array}
\]
Notice that in the above query, the only partial order coherent with the
conditions of \muXPath given above is the one where the greatest fixpoint block
precedes the least fixpoint block.

Notice also that in the above query we could have used the greatest fixpoint in
the second block instead of the least fixpoint.  Indeed, it is easy to see
that, whenever a set of equations in non-recursive, least and greatest
fixpoints have the same meaning, since they both characterize the obvious
single solution of the systems of equations.

Now, suppose that we want to denote the $\ared$ nodes such that all
their $\ared$ descendants reach $\ablue$ nodes on all $\achild$-paths,
and such that all their $\ablue$ descendants reach $\ared$ nodes on at
least one $\achild$-path.  The resulting query is the following, where
we have written the fixpoint blocks according to a partial order
coherent with the conditions of \muXPath:
\[
  X_3: \{
  \begin{array}[t]{@{}l}
    \LFP{X_0\doteq \ablue \lor \BOX{\achild}{X_0}},\\
    \LFP{X_1\doteq \ared \lor \DIAM{\achild}{X_1}},\\
    \GFP{X_2 \doteq (\ared\limp X_0) \land (\ablue\limp X_1) \land
     \BOX{\achild}{X_2}},\\
    \LFP{X_3 \doteq \ared\land X_2}.
    \}
  \end{array}
\]

Finally, to denote the nodes having a $\ared$ sibling that follows it in the
sequence of right siblings, and such that all siblings along such sequence have
a $\ablue$ descendant, we can use the following query:
\[
  X_0: \{
  \lfp\{
  \begin{array}[t]{@{}l}
    X_0\doteq X_1 \land (\ared \lor \DIAM{\aright}{X_0}),\\
    X_1\doteq \ablue \lor  \DIAM{\achild}{X_1} \}\}.
  \end{array}
\]

\begin{figure}[tb]
\[
  \begin{array}{@{}l}
    \Intval{T}{\rho}{A} = \Int[T]{A}, \\
    \Intval{T}{\rho}{X} = \rho(X),\\
    \INTVAL{T}{\rho}{\lnot\varphi} =
      \Dom \setminus \Intval{T}{\rho}{\varphi},\\
    \INTVAL{T}{\rho}{\varphi_1\land\varphi_2} =
    \INTVAL{T}{\rho}{\varphi_1}\cap\INTVAL{T}{\rho}{\varphi_2},\\
    \INTVAL{T}{\rho}{\varphi_1\lor\varphi_2} =
    \INTVAL{T}{\rho}{\varphi_1}\cup\INTVAL{T}{\rho}{\varphi_2},\\
    \INTVAL{T}{\rho}{\DIAM{P}{\varphi}} = \{z \mid \exists z'.
      (z,z')\in \Int[T]{P} \land z'\in\Intval{T}{\rho}{\varphi}\},\\
    \INTVAL{T}{\rho}{\BOX{P}{\varphi}} = \{z \mid \forall z'.
      (z,z')\in \Int[T]{P} \limp z'\in\Intval{T}{\rho}{\varphi}\}.
  \end{array}
\]
  \caption{Semantics of \XPath node expression in \muXPath}
  \label{fig:semantics-ne}
\end{figure}

The formal semantics of \muXPath is defined by considering sibling trees as
interpretation structures.
In addition, we need  second order variable
assignments to interpret variables.  A \emph{(second order) variable assignment} $\rho$ on a tree
$T=(\Dom[T],\Int[T]{\cdot})$ is a mapping that assigns to variables of $\X$
sets of nodes in $\Dom[T]$.
With this notion we can now interpret \XPath node expression extended with second-order variables used in the body of equations as in
Figure~\ref{fig:semantics-ne}.
Now we turn to \muXPath  fixpoint blocks.
The semantics of a least fixpoint block
$\LFP{X_1\doteq\varphi_1,\dots, X_n\doteq\varphi_n}$ is the variable assignment
$\{X_1/\E_1^\mu,\dots,X_n/\E_n^\mu\}$, where $(\E_1^\mu,\dots,\E_n^\mu)$ is the
intersection of all solutions of the fixpoint block \cite{Tars55}. Note that
each solution is an $n$-tuple of sets of nodes of $T$, and the intersection is
done component-wise.  Formally:
  \[
    (\E_1^\mu,\dots,\E_n^\mu) ~=~
    \bigcap \{ (\E_1,\dots,\E_n) \mid \E_1 =
    \INTVAL{T}{\rho[X_1/\E_1,\ldots,X_n/\E_n]}{\varphi_1}, \ldots, \E_n =
    \INTVAL{T}{\rho[X_1/\E_1,\ldots,X_n/\E_n]}{\varphi_n}\},
  \]
  where $\rho[X_1/\E_1,\dots,X_n/\E_n]$ denotes the variable assignment
  identical to $\rho$, except that it assigns to $X_i$ the value $\E_i$, for
  $1\leq i\leq n$.
  Due to syntactic monotonicity, $(\E_1^\mu,\dots,\E_n^\mu)$ is itself a
  solution of the fixpoint block, and indeed the smallest one
  \cite{Tars55,Koze83}.

The semantics of a greatest fixpoint block
  $\GFP{X_1\doteq\varphi_1,\dots, X_n\doteq\varphi_n}$ is the variable
  assignment $\{X_1/\E_1^\nu,\dots,X_n/\E_n^\nu\}$, where
  $(\E_1^\nu,\dots,\E_n^\nu)$ is the union of all solutions of the fixpoint
  block, i.e.:
  \[
    (\E_1^\nu,\dots,\E_n^\nu) ~=~
    \bigcup \{ (\E_1,\dots,\E_n) \mid \E_1 =
    \INTVAL{T}{\rho[X_1/\E_1,\ldots,X_n/\E_n]}{\varphi_1}, \ldots, \E_n =
    \INTVAL{T}{\rho[X_1/\E_1,\ldots,X_n/\E_n]}{\varphi_n}\}.
  \]
  Again due to syntactic monotonicity, $(\E_1^\nu,\dots,\E_n^\nu)$
  is itself a solution of the fixpoint block, and in this case the largest one.

To define the \emph{semantics} of a \muXPath query $X:\F$ relative to a sibling tree
$T$ and a variable assignment $\rho$, we consider a partial order $\preceq$ of
the fixpoint blocks in $\F$, and proceed by induction on $\preceq$.  
Each block $F_i \in \F$ provides a variable
assignment $\{X_1/\E_1,\dots,X_n/\E_n\}$ for the variables  $X_1,\ldots,X_n$ defined in $F_i$, where $\E_1,\ldots,\E_n$ are the sets of nodes of $T$
associated to such variables by the fixpoint computation. 
In particular, 
$X:\F$ over a sibling tree
$T$ is the set $\E\subseteq\Dom[T]$ of nodes of $T$ that the fixpoint block
$F\in\F$ defining $X$ assigns to $X$ in $T$.  We denote such set $\E$ as
$\INT[T]{X:\F}$.  Notice that, since all second-order variables
 appearing in $\F$ are assigned values in the fixpoint block in which they are
 defined, we can omit from $\INTVAL{T}{\rho}{X:\F}$ the second order variables
 assignment $\rho$, and denote it as $\INT[T]{X:\F}$.

\medskip

Although syntactically different, \muXPath  can be seen as an extension of \RXPath
 \cite{Marx04b,Marx05,CDLV09}. In \RXPath node expressions $\DIAM{P}\phi$ and $\BOX{P}{\phi}$ allow for paths  $P$ that are regular expressions over the \XPath axes, namely:
\[
  P \lora  \achild        \mid
             \aright        \mid
             \TEST{\varphi} \mid
             P_1; P_2       \mid
             P_1 \cup P_2   \mid
             P^*            \mid
             P^-.
\]
In \muXPath such \RXPath node expressions can be considered syntactic sugar,
and added at no cost.

To see this consider the following.  Starting from \cite{Koze83}, but also in
\cite{CoMe90,GoKo04}, it has been observed several times in the literature that
node expression of the form $\DIAM{P}\phi$ and $\BOX{P}{\phi}$ with complex $P$
can be considered as abbreviations for suitable fixpoint expressions. This can
be easily done in \muXPath as well.

First of all, we notice that in expressions of the form $P^-$, we can apply
recursively the following equivalences to push the inverse operator $^-$ inside
\RXPath expressions, until it is applied to \XPath axes only:
\[
  \begin{array}{rcl}
    (\TEST{\varphi})^- &=& \TEST{\varphi},\\
    (P_1; P_2)^- &=& P_2^-;P_1^-,\\
    (P_1 \cup P_2)^- &=& P_1^-\cup P_2^-,\\
    (P^*)^- &=& (P^-)^*.
  \end{array}
\]
Also, considering that
$\varphi_1\lor\varphi_2\equiv\lnot(\lnot\varphi_1\land\lnot\varphi_2)$, and
$\BOX{P}{\varphi}\equiv\lnot\DIAM{P}{\lnot\varphi}$, we can assume w.l.o.g.,
that \RXPath queries are formed as follows:
\[
  \begin{array}{rcl}
    \varphi &\lora& A                         \mid
                    \lnot \varphi             \mid
                    \varphi_1 \land \varphi_2 \mid
                    \DIAM{P}{\varphi} ,       \\
    P &\lora&  \achild        \mid
               \aright        \mid
               \achild^-      \mid
               \aright^-      \mid
               \TEST{\varphi} \mid
               P_1; P_2       \mid
               P_1 \cup P_2   \mid
               P^*.
  \end{array}
\]
%
Then, we can transform an arbitrary \RXPath query $\varphi$ into the \muXPath
query $X_{\varphi}:\F$, where $\F$ is a set of fixpoint blocks constructed by
inductively decomposing $\varphi$.  Formally, we let $\F=\tau(\varphi)$, where
$\tau(\varphi)$ is defined by induction on $\varphi$ as follows:
\[
  \begin{array}{rcl}
    \tau(A) &=& \{ \LFP{X_A\doteq A} \},\\
    \tau(\lnot\varphi') &=&
      \{\LFP{X_{\lnot\varphi'}\doteq\lnot X_{\varphi'}}\} \cup \tau(\varphi'),\\
    \tau(\varphi_1\land\varphi_2) &=&
      \{\LFP{X_{\varphi_1\land\varphi_2}\doteq X_{\varphi_1}\land X_{\varphi_2}}\}
      \cup \tau(\varphi_1) \cup \tau(\varphi_2),\\
    \tau(\DIAM{P}{\varphi'}) &=&
      \{\lfp\,\tau_p(\DIAM{P}{\varphi'})\} \cup \tau_t(P) \cup \tau(\varphi'),
  \end{array}
\]
where the function $\tau_p(\cdot)$, defined over formulas of the form
$\DIAM{P}{\varphi'}$, returns a set of fixpoint equations, and the function
$\tau_t(\cdot)$, defined over path expressions $P$, returns the set of
fixpoint blocks corresponding to the node formulas appearing in the tests in
$P$.  In particular, $\tau_p(\DIAM{P}{\varphi'})$ is defined by induction on the
structure of the path expression $P$ as follows:
\[
  \begin{array}{rcl}
    \tau_p(\DIAM{\mathit{axis}}{\varphi'}) &=&
      \{X_{\DIAM{\mathit{axis}}{\varphi'}} \doteq
        \DIAM{\mathit{axis}}{X_{\varphi'}}\},
    \qquad\text{for } \mathit{axis}\in\{\achild,\aright,\achild^-,\aright^-\},\\
    \tau_p(\DIAM{\TEST{\varphi''}}{\varphi'}) &=&
      \{X_{\DIAM{\TEST{\varphi''}}{\varphi'}} \doteq X_{\varphi''}\land X_{\varphi'}\},\\
    \tau_p(\DIAM{P_1;P_2}{\varphi'}) &=&
      \{X_{\DIAM{P_1;P_2}{\varphi'}}\doteq X_{\DIAM{P_1}{\DIAM{P_2}{\varphi'}}}\} \cup
      \tau_p(\DIAM{P_1}{\DIAM{P_2}{\varphi'}}) \cup
      \tau_p(\DIAM{P_2}{\varphi'}),\\
    \tau_p(\DIAM{P_1\cup P_2}{\varphi'}) &=&
      \{X_{\DIAM{P_1\cup P_2}{\varphi'}} \doteq
      X_{\DIAM{P_1}{\varphi'}} \lor X_{\DIAM{P_2}{\varphi'}}\} \cup
      \tau_p(\DIAM{P_1}{\varphi'}) \cup \tau_p(\DIAM{P_2}{\varphi'}),\\
    \tau_p(\DIAM{P^*}{\varphi'}) &=&
      \{X_{\DIAM{P^*}{\varphi'}}\doteq X_{\varphi'} \lor
        \DIAM{P}{X_{\DIAM{P^*}{\varphi'}}} \} \cup \tau_p(\DIAM{P}{\varphi'}).
  \end{array}
\]
Note that $\tau_p$ decomposes inductively only the path expression inside the
first $\DIAM{\cdot}{}$ formula.  Hence, the \muXPath formula $\tau(\varphi)$ is
linear in the size of the \RXPath formula $\varphi$.

For example $\DIAM{\aright^*}{A}$ can be expressed as
\[
  X:\{\LFP{X\doteq A\lor\DIAM{\aright}{X}}\}.
\]
Instead, $\BOX{\aright^*}{A}$, which is equivalent to
$\lnot\DIAM{\aright^*}{\lnot A}$, can be expressed as
\begin{equation}
  \label{eqn:gfp-example}
  X:\{
  \begin{array}[t]{@{}l}
    \LFP{X\doteq \lnot X_1},\\
    \LFP{X_1\doteq \lnot A\lor\DIAM{\aright}{X_1}}\},
  \end{array}
\end{equation}
which in turn is equivalent to
\[
  X:\{\GFP{X\doteq A\land\BOX{\achild}{X}}\}.
\]
Observe that the form of equation~\eqref{eqn:gfp-example} resembles the
encoding of the corresponding \RXPath formula into stratified Monadic
Datalog~\cite{GoKo04}.


\medskip

We close this section by observing that although sibling trees are unranked, we can transform  them into \emph{binary sibling trees}
by considering an additional axis $\afchild$, connecting each node to its first
child only, interpreted as
\[
  \Int[T]{\afchild}  ~=~ \{(z,z\conc 1) \mid z, z\conc 1\in\Dom[T]\}.
\]
Using $\afchild$, we can thus re-express the $\achild$ axis as
$\afchild;\aright^*$.
In the following, we will focus on \muXPath queries that use only the
$\afchild$ and $\aright$ axis relations, and are evaluated over binary sibling
trees.




\section{2WATAs and their Relationship to \muXPath}
\label{sec:muXPath-2WATA}

We consider now two-way automata over finite trees and use them as a formal
tool to address the problems about \muXPath in which we are interested in this
paper.  Specifically, after having introduced the class of two-way weak
alternating tree automata (2WATAs), we establish a tight relationship between
them and \muXPath by devising mutual translations between the two formalisms.

\subsection{Two-way Weak Alternating Tree Automata}
\label{sec:2ata}

We consider a variant of two-way alternating automata~\cite{Slut85} (see
also~\cite{Neve02,CDGJ*08}) that run, possibly infinitely, on finite labeled
trees (Note that typically, infinite runs of automata are considered in the
context of infinite input structures~\cite{GrTW02}, whereas here we consider
possibly infinite runs over finite structures.)  Specifically, alternating tree
automata generalize nondeterministic tree automata, and two-way tree automata
generalize ordinary tree automata by being allowed to traverse the tree both
upwards and downwards.  Formally, let $\B^+(I)$ be the set of positive Boolean
formulae over a set $I$, built inductively by applying $\land$ and $\lor$
starting from $\true$, $\false$, and elements of $I$.  For a set $J\incl I$ and
a formula $\varphi\in\B^+(I)$, we say that $J$ \emph{satisfies} $\varphi$ if
assigning $\true$ to the elements in $J$ and $\false$ to those in
$I\setminus J$, makes $\varphi$ true. We make use of $[-1..k]$ to denote
$\{{-1},0,1,\ldots,k\}$, where $k$ is a positive integer.
A \emph{two-way weak alternating tree automaton} (2WATA) running over labeled
trees
all of whose nodes have at most $k$ children, is a
tuple $\At=(\L,S,s_0,\delta,\alpha)$, where $\L$ is the alphabet of tree
labels, $S$ is a finite set of states, $s_0\in S$ is the initial state,
$\delta:S\times\L\ra\B^+([-1..k]\times S)$ is the transition function, and
$\alpha$ is the accepting condition discussed below.

The transition function maps a state $s\in S$ and an input label $a\in\L$ to a
positive Boolean formula over $[-1..k]\times S$.  Intuitively, if
$\delta(s,a)=\varphi$, then each pair $(c',s')$ appearing in $\varphi$
corresponds to a new copy of the automaton going to the direction suggested by
$c'$ and starting in state $s'$.  For example, if $k=2$ and
$\delta(s_1,a)=((1,s_2)\land(1,s_3))\lor((-1,s_1)\land(0,s_3))$, when the
automaton is in the state $s_1$ and is reading the node $x$ labeled by $a$, it
proceeds either by sending off two copies, in the states $s_2$ and $s_3$
respectively, to the first child of $x$ (i.e.,
$x\conc 1$), or by sending off one copy in the state $s_1$ to the
parent of $x$ (i.e., $x\conc{-1}$) and one copy in
the state $s_3$ to $x$ itself (i.e., $x\conc 0$).

A run of a 2WATA is obtained by resolving all existential choices. The
universal choices are left, which gives us a tree. Because we are
considering two-way automata, runs can start at arbitrary tree nodes,
and need not start at the root.
Formally, a \emph{run} of a 2WATA $\At$ over a labeled tree $T=\Tree[T]$ from a
node $x_0\in\Delta^T$ is, in general, an infinite $\Dom\times S$-labeled tree
$R=\Tree[R]$ satisfying:
\begin{enumerate}
\item $\eword\in\Dom[R]$ and $\Lab[R]{\eword}=(x_0,s_0)$.
\item
Let $\Lab[R]{r}=(x,s)$ and $\delta(s,\Lab[T]{x})=\varphi$.
Then there is a (possibly empty) set
$P=\{(c_1,s_1),\ldots,(c_n,s_n)\}\incl[-1..k]\times S$ of pairs such that
$P$ satisfies $\varphi$, and
for each $i\in\{1,\ldots,n\}$, we have that $r\conc i\in\Dom[R]$,
$x\conc c_i\in\Dom[T]$, and $\Lab[R]{r\conc i}=(x\conc c_i,s_i)$.
  In particular, this means that $\varphi$ cannot be $\false$, and if $\varphi$
  is $\true$ then $r$ need not have children.
\end{enumerate}
Intuitively, a run $R$ keeps track of all transitions that the 2WATA $\At$
performs on a labeled input tree $T$: a node $r$ of $R$ labeled by $(x,s)$
describes a copy of $\At$ that is in the state $s$ and is reading the node $x$
of $T$.  The children of $r$ in the run represent
the transitions made by the multiple copies of $\At$ that are being sent off
either upwards to the parent of $x$, downwards to
one of the children of $x$, or to $x$ itself.


2WATAs are called ``weak'' due to the specific form of the acceptance
condition, given in the form of a set $\alpha\subseteq S$~\cite{KuVW00}.
Specifically, there exists a partition of $S$ into disjoint sets, $S_i$, such
that for each set $S_i$, either $S_i\subseteq \alpha$, in which case $S_i$ is
an \emph{accepting set}, or $S_i\cap \alpha=\emptyset$, in which case $S_i$ is
a \emph{rejecting set}.  In addition, there exists a partial order $\leq$ on
the collection of the $S_i$'s such that, for each $s\in S_i$ and $s'\in S_j$
for which $s'$ occurs in $\delta(s,a)$, for some $a\in\L$, we have $S_j\leq
S_i$.  Thus, transitions from a state in $S_i$ lead to states in either the
same $S_i$ or a lower one.  It follows that every infinite path of a run of a
2WATA ultimately gets ``trapped'' within some $S_i$.  The path is
\emph{accepting} if and only if $S_i$ is an accepting set.  A run $(T_r,r)$ is
\emph{accepting} if all its infinite paths are accepting.
A node $x$ is \emph{selected} by a 2WATA $\At$ from a labeled tree $T$ if there
exists an accepting run of $\At$ over $T$ from $x$.

\subsection{Binary Trees and Sibling Trees}

As mentioned before, we assume that \muXPath queries are expressed over binary
sibling trees, where the left child of a node
corresponds to the $\afchild$ axis, and the right
child corresponds to the $\aright$ axis.
To ensure that generic binary trees (i.e., trees of branching degree~2)
represent binary sibling trees, we make use of special propositions $\ifc$,
$\irs$, $\hfc$, $\hrs$.  The proposition $\ifc$ (resp., $\irs$) is used to keep
track of whether a node \emph{is the first child} (resp., \emph{is the right
 sibling}) of its parent, and $\hfc$ (resp.,
$\hrs$) is used to keep track of whether a node \emph{has a first child}
(resp., \emph{has a right sibling}).
In particular, we consider binary trees whose nodes are labeled with subsets of
$\Sigma_M=\Sigma\cup\{\ifc,\irs,\hfc,\hrs\}$.  We call such a tree
$T=\Tree[T]$ a \emph{well-formed binary tree} if it satisfies the following
conditions:
\begin{itemize}
\item For each node $x$ of $T$, if $\Lab{x}$ contains $\hfc$, then $x\conc 1$
  is meant to represent the $\afchild$ child of $x$
  and hence $\Lab{x\conc 1}$ contains $\ifc$ but not $\irs$.  Similarly, if
  $\Lab{x}$ contains $\hrs$, then $x\conc 2$ is meant to represent the
  $\aright$ child of $x$ and hence $\Lab{x\conc 2}$
  contains $\irs$ but not $\ifc$.
\item The label $\Lab{\eword}$ of the root of $T$ contains neither $\ifc$, nor
  $\irs$, nor $\hrs$.  In this way, we restrict the root of $T$ so as to
  represent the root of a sibling tree.
\end{itemize}
Notice that every (binary) sibling tree $T$ trivially induces a well-formed
binary tree $\pi_b(T)$ obtained by simply adding the labels $\ifc$, $\irs$,
$\hfc$, $\hrs$ in the appropriate nodes.

On the other hand, a well-formed binary tree $T=\Tree[T]$ induces a sibling
tree $\pi_s(T)$.  To define $\pi_s(T)=(\Dom[\Ts],\Int[\Ts]{\cdot})$, we define,
by induction on $\Dom$, a mapping $\pi_s$ from $\Dom$ to words over $\Nat$ as
follows:
\begin{itemize}
\item $\pi_s(\eword)=\eword$;
\item if $\hfc\in\Lab{x}$, then $\pi_s(x\conc 1)=\pi_s(x)\conc 1$;

\item if $\hrs\in\Lab{x}$ and $\pi_s(x)=z\conc n$, with $z\in\Nat^*$ and
  $n\in\Nat$, then $\pi_s(x\conc 2)=z\conc (n{+}1)$.
\end{itemize}
Then, we take $\Dom[\Ts]$ to be the range of $\pi_s$, and we define the
interpretation function $\Int[\Ts]{\cdot}$ as follows: for each $A\in\SigmaA$,
we define $\Int[\Ts]{A}=\{\pi_s(x)\in\Dom[\Ts]\mid A\in\Lab{x}\}$.
%
Note that the mapping $\pi_s$ ignores irrelevant parts of the binary tree,
e.g., if the label of a node $x$ does not contain $\hfc$, even if $x$ has a
$1$-child, such a node is not included in the sibling
tree.  Note also that we cannot rule out a priori the presence or
 irrelevant nodes, since our techniques for query satisfiability and
 containment rely on non-emptiness of well-formed binary trees (cf.\
 Section~\ref{sec:satisfiability-containment}).
%

\subsection{From \muXPath to 2WATAs}
\label{sec:muXPath-to-2wata}

We show now how to construct \myi~from each \muXPath query $\varphi$ (over
binary sibling trees) a 2WATA $\At_{\varphi}$ whose number of states is linear
in $|\varphi|$ and that selects from a tree $T$ precisely the nodes in
$\Int[T]{\varphi}$, and \myii~from each 2WATA $\At$ a \muXPath query
$\varphi_{\At}$ 
of size linear in the number of states of $\At$ 
that, when evaluated over a tree $T$, returns precisely the nodes selected by
$\At$ from $T$.

In order to translate \muXPath to 2WATAs, we need to make use of a notion of
syntactic closure, similar to that of Fisher-Ladner closure of a formula of PDL
\cite{FiLa79}.
The \emph{syntactic closure} $\CL(X:\F)$ of a \muXPath query $X:\F$ is defined
as $\{\ifc,\irs,\hfc,\hrs\}\cup\CL(\F)$, where $\CL(\F)$ is defined as follows:
for each equation $X\doteq\varphi$ in some fixpoint block in $\F$,
$\{X,\nnf(\varphi)\}\subseteq\CL(\F)$, where $\nnf(\psi)$ denotes the negation
normal form of $\psi$, and then we close the set under sub-expressions (in
negation normal form), by inductively applying the rules in
Figure~\ref{fig:closure}.
The \emph{negation normal form} of an XPath node expression is
 obtained in the standard way, by pushing negation inside operators as much as
 possible, i.e., by recursively replacing
\[
  \begin{array}{rcl}
    \lnot(\varphi_1\land\varphi_2) &\text{by}&
      \lnot\varphi_1\lor\lnot\varphi_2,\\
    \lnot(\varphi_1\lor\varphi_2) &\text{by}&
      \lnot\varphi_1\land\lnot\varphi_2,
  \end{array}
  \qquad\qquad
  \begin{array}{rcl}
    \lnot\DIAM{P}{\varphi} &\text{by}& \BOX{P}{\lnot\varphi},\\
    \lnot\BOX{P}{\varphi} &\text{by}& \DIAM{P}{\lnot\varphi}
  \end{array}
\]
until negation appears in front of atomic propositions only.
%
It is easy to see that, for a \muXPath query $\query$, the cardinality of
$\CL(\query)$ is linear in the length of $\query$.

\begin{figure}[tbp]
  \[
   \begin{array}{@{\text{if}~}l@{~\text{then}~}l}
     \psi \in\CL(\varphi) &
       \nnf(\lnot\psi) \in\CL(\varphi),
       \qquad\text{if $\psi$ is not of the form $\lnot\psi'$}\\
     \lnot\psi \in\CL(\varphi) & \psi \in\CL(\varphi)\\
     \psi_1\land\psi_2 \in\CL(\varphi) & \psi_1,\, \psi_2 \in\CL(\varphi)\\
     \psi_1\lor \psi_2 \in\CL(\varphi) & \psi_1,\, \psi_2 \in\CL(\varphi)\\
     \DIAM{P}{\psi} \in\CL(\varphi) & \psi\in\CL(\varphi),
       \qquad\text{for $P\in\{\afchild,\aright,\afchild^-,\aright^-\}$}\\
     \BOX{P}{\psi} \in\CL(\varphi) & \psi\in\CL(\varphi),
       \qquad\text{for $P\in\{\afchild,\aright,\afchild^-,\aright^-\}$}
   \end{array}
  \]
  \caption{Closure of \muXPath expressions}
  \label{fig:closure}
\end{figure}

Let $\query=X_0:\F$ be a \muXPath query.  We show how to construct a 2WATA
$\At_{\query}$ that, when run over a well-formed binary tree $T$, accepts
exactly from the nodes in $\Int[T]{\query}$.
The 2WATA
$\At_{\query}=(\L,S_{\query},s_{\query},\delta_{\query},\alpha_{\query})$
is defined as follows.
\begin{itemize}
\item The alphabet is $\L=2^{\Sigma_M}$, with
   $\Sigma_M=\Sigma\cup\{\ifc,\irs,\hfc,\hrs\}$.
  This corresponds to labeling each node of the tree with a truth assignment to
  the atomic propositions, including the special ones that encode information
  about the parent node and about whether the
  children are significant.

\item The set of states is $S_{\query}=\CL(\query)$.
  Intuitively, when the automaton is in a state $\psi\in\CL(\query)$ and visits
  a node $x$ of the tree, it checks that the node expression $\psi$ holds in
  $x$.

\item The initial state is $s_{\query}=X_0$.

\item The transition function $\delta_{\query}$ is defined as follows:

\begin{enumerate}
\item \label{enu-transitions-atomic} For each $\lambda\in\L$, and each
  $\sigma\in\Sigma_M$,
  \[
    \begin{array}{rcl}
      \delta_{\query}(\sigma,\lambda) &=&
        \begin{cases}
          \true,  & \text{if $\sigma\in\lambda$}\\
          \false, & \text{if $\sigma\notin\lambda$}
        \end{cases}
      \\
      \delta_{\query}(\lnot\sigma,\lambda) &=&
        \begin{cases}
          \true,  & \text{if $\sigma\notin\lambda$}\\
          \false, & \text{if $\sigma\in\lambda$}
        \end{cases}
    \end{array}
  \]
  Such transitions check the truth value of atomic propositions, and of their
  negations in the current node of the tree, by simply checking whether the
  node label contains the proposition or not.

\item \label{enu-transitions-nonatomic} For each $\lambda\in\L$ and each formula
  $\psi\in\CL(\query)$, the automaton inductively decomposes $\psi$ and moves
  to appropriate states to check the sub-expressions as follows:
  \[
      \begin{array}[t]{@{}rcl}
        \delta_{\query}(\psi_1\land\psi_2,\lambda)
        &=& (0,\psi_1) ~\land~ (0,\psi_2)\\
        \delta_{\query}(\psi_1\lor\psi_2,\lambda)
        &=& (0,\psi_1) ~\lor~ (0,\psi_2)
        \\[2mm]
        \delta_{\query}(\DIAM{\afchild}{\psi},\lambda)
        &=& (0,\hfc) \land (1,\psi)\\
        \delta_{\query}(\DIAM{\aright}{\psi},\lambda)
        &=& (0,\hrs) \land (2,\psi)\\
        \delta_{\query}(\DIAM{\afchild^-}{\psi},\lambda)
        &=& (0,\ifc) \land (-1,\psi)\\
        \delta_{\query}(\DIAM{\aright^-}{\psi},\lambda)
        &=& (0,\irs) \land (-1,\psi)
        \\[2mm]
        \delta_{\query}(\BOX{\afchild}{\psi},\lambda)
        &=& (0,\lnot\hfc) \lor (1,\psi)\\
        \delta_{\query}(\BOX{\aright}{\psi},\lambda)
        &=& (0,\lnot\hrs) \lor (2,\psi)\\
        \delta_{\query}(\BOX{\afchild^-}{\psi},\lambda)
        &=& (0,\lnot\ifc) \lor (-1,\psi)\\
        \delta_{\query}(\BOX{\aright^-}{\psi},\lambda)
        &=& (0,\lnot\irs) \lor (-1,\psi)
      \end{array}
  \]

\item \label{enu-transitions-equation} Let $X\doteq\varphi$ be an equation in
  one of the blocks of $\F$.  Then, for each $\lambda\in\L$, we have
  $\delta_{\query}(X,\lambda) = (0,\varphi)$.
\end{enumerate}

\item To define the weakness partition of $\At_{\query}$, we partition the
  expressions in $\CL(\query)$ according to the partial order on the fixpoint
  blocks in $\F$.  Namely, we have one element of the partition for each
  fixpoint block $F\in\F$.  Such an element is formed by all expressions
  (including variables) in $\CL(\query)$ in which at least one variable defined
  in $F$ occurs and no variable defined in a fixpoint block $F'$ with $F\prec
  F'$ occurs.  In addition, there is one element of the partition consisting of
  all expressions in which no variable occurs.
  Then the acceptance condition $\alpha_{\query}$ is the union of all elements
  of the partition corresponding to a greatest fixpoint block.
  Observe that the partial order on the fixpoint blocks in $\F$
  guarantees that the transitions of $\At_{\query}$ satisfy the weakness
  condition.  In particular, each element of the weakness partition is either
  contained in $\alpha_{\query}$ or disjoint from $\alpha_{\query}$.
  This guarantees that an accepting run cannot get trapped in a state
  corresponding to a least fixpoint block, while it is allowed to stay forever
  in a state corresponding to a greatest fixpoint block.
\end{itemize}


\begin{theorem}
  \label{thm:muXPath-to-2WATA}
  Let $\query$ be a \muXPath query.  Then:
  \begin{enumerate}
  \item The number of states of the corresponding 2WATA $\At_{\query}$ is
    linear in the size of $\query$.
  \item For every binary sibling tree $T$, a node $x$ of $T$ is in
    $\Int[T]{\query}$ iff $\At_{\query}$ selects $x$ from the well-formed
    binary tree $\pi_b(T)$ induced by $T$.\footnote{Observe that the trees $T$
     and $\pi_b(T)$ have the same nodes, and they differ only in the label of
     nodes.}
  \end{enumerate}
\end{theorem}

\begin{proof}
Item~(1) follows immediately from the fact that the size of $\CL(\query)$ is
linear in the size of $\query$.
We turn to item~(2).
In the proof, we blur the distinction between $T$ and $\pi_b(T)$, denoting it
simply as $T$, since the two trees are identical, except for the additional
labels in $\pi_b(T)$, which are considered by $\At_{\query}$ but ignored by
$\query$.

Let $\query=X:\F$.  We show by simultaneous induction on the structure of $\F$
and on the nesting of fixpoint blocks, that for every expression
$\psi\in\CL(\F)$ and for every node $x$ of $T$, we have that $\At_{\query}$,
when started in state $\psi$, selects $x$ from $T$ if and only if
$x\in\Int[T]{\psi}$.
\begin{itemize}
\item Indeed, when $\psi$ is an atomic proposition, then the claim follows
  immediately by making use of the transitions in item~(1) of the definition of
  $\delta$.
\item When $\psi=\psi_1\land\psi_2$ or $\psi=\psi_1\lor\psi_2$, the claim
  follows by inductive hypothesis, making use of the first two transitions in
  item~(2).
\item When $\psi=\DIAM{\afchild}{\psi_1}$, the 2WATA checks that $x$ has a
  first child $y=x\cdot 1$, and moves to $y$ checking that $y$ is selected from
  $T$ starting in state $\psi_1$.  By induction
  hypothesis, this holds iff
  $y\in\Int[T]{\psi_1}$, and the claim follows.

  The cases of $\psi=\DIAM{\aright}{\psi_1}$, $\psi=\DIAM{\afchild^-}{\psi_1}$,
  and $\psi=\DIAM{\aright^-}{\psi_1}$ are analogous.
\item When $\psi=\BOX{\afchild}{\psi_1}$, the 2WATA checks that either $x$ does
  not have a first child, or that the first child $y=x\cdot 1$ is selected from
  $T$ starting in state $\psi_1$.  By induction
  hypothesis, this holds iff
  $y\in\Int[T]{\psi_1}$, and the claim follows.

  The cases of $\psi=\BOX{\aright}{\psi_1}$, $\psi=\BOX{\afchild^-}{\psi_1}$,
  and $\psi=\BOX{\aright^-}{\psi_1}$ are analogous.

\item When $\psi=Z$, let $Z\doteq\varphi$ be the equation defining $Z$.
  Then according to the transitions in item~(3), the 2WATA checks that $x$ is
  selected from $T$ starting in state $\varphi$.
  The definition of the 2WATA acceptance condition $\alpha_{\query}$ guarantees
  that, if $Z$ is defined in a least fixpoint block then an accepting run
  cannot get trapped in the element $S_i$ of the weakness partition containing
  $Z$; instead, if $Z$ is defined in a greatest fixpoint block then an
  accepting run is allowed to stay forever in $S_i$.

  We consider only the least fixpoint case; the greatest fixpoint case is
  similar.
  If there is an accepting run, it will go through states in $S_i$ (including
  $Z$) only a finite number of times, and on each of its paths it will get to
  a node $y$ in a state $\xi\in S_j$, where $S_j$ strictly precedes $S_i$,
  i.e., with $S_j\leq S_i$ and $S_j\neq S_i$.  By induction on the nesting of
  fixpoint blocks (each level of nesting corresponds to an
   element of the state partition), we have
  that $\At_{\query}$, when started in state $\xi$, selects $y$ from $T$ if and
  only if $y\in\Int[T]{\xi}$.  Then, since the automaton state $Z$ is not
  contained in $\alpha_{\query}$, the acceptance condition ensures that the
  transition in item~(3) is applied only a finite number of times, and
  considering the least fixpoint semantics, by structural induction we get that
  $x\in\Int[T]{Z}$.

  For the other direction, we show that, if $x\in\Int[T]{Z}$, then
  $\At_{\query}$ has an accepting run $R=\Tree[R]$ witnessing that $x$ is
  selected from $T$ starting in state $Z$.  We can define the run $R$ by
  exploiting the equation $Z\doteq\varphi$ to make the transition according to
  item~(3), and following structural induction to decompose formulas, ensuring
  that, for all nodes $y\in\Dom[R]$ with $\Lab[R]{y}=(x',\psi')$ we have that
  $x'\in\Int[T]{\psi'}$.  In particular, we need to resolve the
  nondeterminism coming from disjunctions in the transition function of
  $\At_{\query}$ (in turn coming from disjunctions in
  $\query$). Intuitively, we do so by choosing the disjunct that is
  satisfied in the node of $T$, ensuring that we do not postpone
   indefinitely the satisfaction of the fixpoint.  More specifically, since
   $x\in\Int[T]{Z}$ is defined by a least-fixed point (over its block), we have
   that $x\in\Int[T]{Z_i}$, for some approximant $Z_i$ of $Z$. Such $Z_i$ is
   obtained by recursively unfolding the definitions of $Z$ and of the other
   variables in its block, at most a finite number $i$ of times.  Hence, we can
   choose the disjuncts in the corresponding run $R$ in a way that the variable
   $Z$, and the other variables in its block, are reintroduced at most $i$
   times each, before moving to the next partition.  So, the run $R$ does not
   loop on the partition containing $Z$, and does not violate the acceptance
   condition.
\end{itemize}
The claim then follows since the initial state of $\At_{\query}$ is $\query$.
\end{proof}

We observe that, although the number of states of $\At_{\query}$ is linear in
the size of $\query$, the alphabet of $\At_{\query}$ is the powerset of that of
$\query$, and hence the transition function and the entire $\At_{\query}$ is
exponential in the size of $\query$.  However, as we will show later, this does
not affect the complexity of query evaluation, query containment, and more in
general reasoning over queries.\footnote{Observe that we could also change the
 transition function of 2WATAs, by making the dependency on the node label
 implicit.  Specifically, we could replace $\delta(s,\lambda)=\varphi$ with
 $\delta(s)=\varphi'$, and add to $\varphi'$ tests that check whether a symbol
 $\sigma$ is in the node label $\lambda$.  This would allows us to define a
 linear translation from \muXPath queries into automata, at the cost of using a
 non-standard definition of 2WATAs.}

\subsection{From 2WATAs to \muXPath}

We show now how to convert 2WATAs into \muXPath queries while preserving the
set of nodes selected from (well formed) binary trees.

Consider a 2WATA $\At=(\L,S,s_0,\delta,\alpha)$, where $\L=2^{\Sigma_M}$,
with $\Sigma_M= \Sigma\cup\{\ifc,\irs,\hfc,\hrs\}$, and let
$S=\cup_{i=1}^k S_i$ be the weakness partition of $\At$.
We define a translation $\pi$ as follows.
\begin{itemize}
\item For a positive Boolean formula $f\in\B^+([-1..2]\times S)$, we define a
  \muXPath node expression $\pi(f)$ inductively as follows:
  \[
    \begin{array}{@{}r@{~}c@{~}l@{\qquad\qquad}r@{~}c@{~}l}
      \pi(\false) &=& \FALSE
      & \pi(\true) &=& \TRUE\\
      \pi((1,s)) &=& \DIAM{\afchild}{s}
      & \pi((2,s)) &=& \DIAM{\aright}{s}\\
      \pi((0,s)) &=& s
      & \pi((-1,s)) &=&
      \DIAM{\afchild^-}{s} \lor \DIAM{\aright^-}{s}\\
      \pi(f_1\land f_2) &=& \pi(f_1)\land\pi(f_2)
      & \pi(f_1\lor f_2) &=& \pi(f_1)\lor\pi(f_2)\\
    \end{array}
  \]

\item For each state $s\in S$, we define a \muXPath equation $\pi(s)$ as
  follows:
  \[
    \textstyle
    s ~\doteq~ \bigvee_{\lambda\in\L}
    (\tilde{\lambda}\land \pi(\delta(s,\lambda))),
  \]
  where $\tilde{\lambda}=(\bigwedge_{a\in\lambda}a) \land
  (\bigwedge_{a\in(\Sigma_M\setminus\lambda)}\lnot a)$.
\item For each element $S_i$ of the weakness partition, we define a \muXPath
  fixpoint block as follows:
  \[
    \pi(S_i) ~=~
    \begin{cases}
      \GFP{\pi(s)\mid s\in S_i}, & \text{if } S_i\subseteq \alpha\\
      \LFP{\pi(s)\mid s\in S_i}, & \text{if } S_i\cap\alpha=\emptyset
    \end{cases}
  \]

\item Finally, we define the \muXPath query $\pi(\At)$ as:
  \[
    \pi(\At) ~=~ s_0:\{\pi(S_1),\dots,\pi(S_k)\}.
  \]
\end{itemize}

\begin{theorem}
  \label{thm:2WATA-to-muXPath}
  Let $\At$ be a 2WATA. Then:
  \begin{enumerate}
  \item The length of the \muXPath query $\pi(\At)$ is linear in the size of $\At$.
  \item For every binary sibling tree $T$, we have that $\At$ selects a node
    $x$ from $\pi_b(T)$ iff $x$ is in $\INT[T]{\pi(\At)}$.
  \end{enumerate}
\end{theorem}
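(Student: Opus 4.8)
The plan is to prove item~1 by inspection and item~2 by relating accepting runs of $\At$ to the least/greatest-fixpoint semantics of $\pi(\At)$, essentially running the argument of Theorem~\ref{thm:muXPath-to-2WATA} in reverse.

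For item~1, I would simply observe that $\pi(\At)$ has exactly one equation $\pi(s)$ per state $s\in S$, grouped into $k$ fixpoint blocks according to the weakness partition, and that each body $\bigvee_{\lambda\in\L}(\tilde\lambda\land\pi(\delta(s,\lambda)))$ contributes a bounded amount per transition of $\At$. Strictly speaking the body ranges over all $\lambda\in\L$, so its textual length depends on $|\L|$; I would note that this mirrors the situation in Theorem~\ref{thm:muXPath-to-2WATA}, where the alphabet $\L=2^{\Sigma\cup\{\ifc,\irs,\hfc,\hrs\}}$ is exponential but the number of equations (states) is linear, and that ``size'' here should be read as the number of states/equations, which is $|S|$. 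Hence the query is linear in the size of $\At$ in the same sense used throughout.

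For item~2, the key structural fact is that the translation $\pi$ sets up a state-for-variable and equation-for-transition correspondence, so that $\pi(\At)$ is, up to the encoding of $\At$'s directions $(c,s)$ as \XPath\ modalities via $\pi((c,s))$, nothing more than $\At$ written as a system of fixpoint equations. First I would verify the \emph{local correctness} of $\pi$: for any node $x$ of a well-formed binary tree $T$ and any variable assignment corresponding to candidate selection sets for the states, the value of the body $\bigvee_{\lambda\in\L}(\tilde\lambda\land\pi(\delta(s,\lambda)))$ at $x$ equals the value obtained by evaluating $\delta(s,\Lab[T]{x})$ over those same sets, reading a pair $(c,s')$ as ``move in direction $c$ and require $s'$ to hold there.'' This uses that exactly one $\tilde\lambda$ is true at $x$ (namely $\lambda=\Lab[T]{x}$), and that the definition of $\pi((-1,s))$ correctly splits the upward move into the $\ifc$/$\irs$ cases to recover the \apar\ and \aleft\ edges in the binary encoding, matching the transitions $\delta_\query(\DIAM{\afchild^-}{\cdot})$ and $\delta_\query(\DIAM{\aright^-}{\cdot})$ of Theorem~\ref{thm:muXPath-to-2WATA}. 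I would then argue by induction on the weakness partition, ordered by $\leq$, that for each block $S_i$ the fixpoint block $\pi(S_i)$ assigns to each $s\in S_i$ exactly the set of nodes selected by $\At$ when started in $s$: the base blocks are those minimal in $\leq$, whose transitions stay within $S_i$, and the inductive step uses the already-computed selection sets for strictly lower blocks as the ``external'' variable values.

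The main obstacle, as in the forward direction, is matching the \emph{weak acceptance condition} of $\At$ with the \emph{least/greatest-fixpoint} choice in $\pi(S_i)$. Here I would rely on the standard fact that for weak alternating automata an accepting run is one all of whose infinite paths get trapped in an accepting set $S_i\subseteq\alpha$; equivalently, a path may loop forever inside a greatest-fixpoint block but must eventually leave any least-fixpoint block. Concretely, for a rejecting block ($S_i\cap\alpha=\emptyset$, translated to $\lfp$) I would show that the set of nodes from which $\At$ has an accepting run using only finitely many $S_i$-steps before descending to a lower block coincides with the least solution of the equations $\{\pi(s)\mid s\in S_i\}$ over the fixed lower-block values; the ``escape node'' argument of Theorem~\ref{thm:muXPath-to-2WATA} transfers directly, guaranteeing no infinite loop within $S_i$ in an accepting run and conversely letting us build such a run from membership in the least fixpoint by unfolding. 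For an accepting block ($S_i\subseteq\alpha$, translated to $\gfp$) the dual argument applies: infinite loops within $S_i$ are permitted, so the selected set is the greatest solution. Assembling these per-block equalities along $\leq$ and specializing to the initial state $s_0$ gives $\At$ selects $x$ from $\pi_b(T)$ iff $x\in\INT[T]{\pi(\At)}$, as required.
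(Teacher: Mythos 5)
Your argument is correct, but it takes a genuinely different route from the paper's. You prove item~2 directly: local correctness of $\pi$ at each node (exactly one $\tilde\lambda$ is true, and $\pi((-1,s))$ correctly splits the upward move via $\ifc$/$\irs$, failing at the root as required), followed by an induction along the weakness order $\leq$ showing block-by-block that $\pi(S_i)$ computes the states' selection sets, with the $\lfp$/$\gfp$ choice matched to the weak acceptance condition by the escape-node argument and its dual. The paper avoids redoing this semantic work entirely: it composes $\pi$ with a mildly modified version of the forward construction of Section~\ref{sec:muXPath-to-2wata} (inlining the transitions for $\land$, $\lor$, and for variables, rather than issuing $(0,\cdot)$ moves), observes that the resulting 2WATA $\At_{\pi(\At)}$ reproduces $\At$ transition-for-transition on well-formed binary trees (in particular, $\pi((-1,s))$ unfolds back to the atom $(-1,s)$ at non-root nodes), and then invokes Theorem~\ref{thm:muXPath-to-2WATA} to conclude. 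The paper's route is shorter and reuses the one hard semantic lemma already proved, at the cost of having to justify that the modified construction is equivalent to the original; your route is self-contained and makes the correspondence between weak acceptance and alternation-free fixpoints explicit in both directions, which is arguably more illuminating but duplicates much of the work of Theorem~\ref{thm:muXPath-to-2WATA}. Your reading of item~1 (the per-equation disjunction over $\L$ is exponential in $|\Sigma|$, but so is the 2WATA's transition table, so linearity holds relative to the full description of $\At$) is exactly the paper's own remark. One small point you leave implicit and could state: the block variables occur only positively in the bodies $\pi(\delta(s,\lambda))$, since $\delta(s,\lambda)$ is a positive Boolean formula and negation in $\tilde\lambda$ applies only to alphabet atoms; this is what licenses the Knaster--Tarski least/greatest solutions your per-block induction manipulates.
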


\begin{proof}
Item~1 follows immediately from the above construction.  We only observe that
in defining the \muXPath equations $\pi(s)$ for a state $s\in S$, we have a
disjunction over the label set $\L$, which is exponential in the number of
atomic propositions in $\Sigma$.  On the other hand, the transition function of
the 2WATA itself needs to deal with the elements of $\L$, and hence is also
exponential in the size of $\Sigma$.

We turn to item~2.  We again ignore the distinction between $T$ and $\pi_b(T)$.
It is easy to check that, by applying the construction in
Section~\ref{sec:muXPath-to-2wata} to the \muXPath query $\pi(\At)$, we obtain
a 2WATA $\At_{\pi(\At)}$ that on well-formed binary trees is equivalent to
$\At$.
Indeed, for every transition of $\At$, the construction introduces in
$\At_{\pi(\At)}$ corresponding transitions that mimick it.
In particular, for an atom of the form $(-1,s)$ appearing in the right-hand
side of a transition of $\At$, we obtain in $\pi(\At)$ a \muXPath expression
$\varphi=\DIAM{\afchild^-}{s} \lor \DIAM{\aright^-}{s}$.  Then we have that
$\delta_{\pi(\At)}(\varphi,\lambda) = (0,\DIAM{\afchild^-}{s}) \lor
(0,\DIAM{\aright^-}{s})$, and for such resulting states we have in turn
transitions
$\delta_{\pi(\At)}(\DIAM{\afchild^-}{s},\lambda)=(0,\ifc)\land (-1,s)$ and
$\delta_{\pi(\At)}(\DIAM{\aright^-}{s},\lambda)=(0,\irs)\land (-1,s)$.  In both
cases where $\lambda$ contains $\ifc$ or $\irs$, this expression results in
$(-1,s)$, while in the root (where $\lambda$ contains neither $\ifc$ nor
$\irs$) the expression results in $\false$, thus yielding a transition
equivalent to the one resulting from the atom $(-1,s)$ of $\At$.
Hence, by Theorem~\ref{thm:muXPath-to-2WATA}, we get the claim.
\end{proof}


\section{Acceptance and Non-Emptiness for 2WATAs}
\label{sec:2wata}

We provide now computationally optimal algorithms for deciding the acceptance
and non-emptiness problems for 2WATAs.

\subsection{The Acceptance Problem}

Given a 2WATA $\At=(\L,S,s_0,\delta,\alpha)$, a labeled tree $T=\Tree[T]$, and
a node $x_0\in\Delta^T$, we would like to know whether $x_0$ is selected by
$\At$ from $T$.  This is called the \emph{acceptance problem}.  We follow here
the approach of \cite{KuVW00}, and solve the acceptance problem by first taking
a product $\At\times T_{x_0}$ of $\At$ and $T$ from~$x_0$. This product is an
alternating automaton over a one letter alphabet $\L_0$, consisting of a single
letter, say $a$. This product automaton simulates a run of $\At$ on $T$
from~$x_0$.  The product automaton is
$\At\times T_{x_0} =
(\L_0,S\times\Delta^T,(s_0,x_0),\delta',\alpha\times\Delta^T)$, where $\delta'$
is defined as follows:
\begin{itemize}
\item
$\delta'((s,x),a)=\Theta_x(\delta(s,\ell^T(x)))$, where $\Theta_x$ is the
substitution that replaces a pair $(c,t)$ in $\delta(s,\ell^T(x))$
by the pair $(t,x\conc c)$ if $x\conc c \in\Delta^T$, and by \false
otherwise.
\end{itemize}
Note that the size of $\At\times T_{x_0}$ is simply the product of the size of
$\At$ and the size of $T$, and that the only elements of $\L$ that are used in
the construction of $\At\times T_{x_0}$ are those that appear among the labels
of $T$.  Note also that $\At\times T_{x_0}$ can be viewed as a weak alternating
word automaton running over the infinite word $a^\omega$, as by taking the
product with $T$ we have eliminated all directions.  In fact, one can simply
view $\At\times T_{x_0}$ as a 2-player infinite game; see \cite{GrTW02}.

We can now state the relationship between $\At\times T_{x_0}$ and $\At$,
which is essentially a restatement of Proposition~3.2 in \cite{KuVW00}.
\begin{proposition}\label{prop:one-letter}
  Node $x_0$ is selected by $\At$ from $T$ iff $\At\times T_{x_0}$ accepts
  $a^\omega$.
\end{proposition}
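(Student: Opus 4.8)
The plan is to prove the biconditional in Proposition~\ref{prop:one-letter} by establishing a bijection between accepting runs of $\At$ over $T$ from $x_0$ and accepting runs of the product automaton $\At\times T_{x_0}$ over the single-letter word $a^\omega$. The key observation driving the whole argument is that the product construction has, by design, hard-wired the tree $T$ and the starting node $x_0$ into the state space: a state of $\At\times T_{x_0}$ is a pair $(s,x)\in S\times\Dom[T]$, recording simultaneously the automaton state $s$ and the node $x$ of $T$ being visited. Thus a run of $\At$ over $T$, whose nodes are labeled by pairs $(x,s)$, carries exactly the same information as a run of $\At\times T_{x_0}$, whose nodes are labeled by states $(s,x)$. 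The substitution $\Theta_x$ is precisely what makes the two transition relations agree: it replaces each atom $(c,t)$ in $\delta(s,\lab[T](x))$ by $(t,x\conc c)$ when $x\conc c\in\Dom[T]$, which is the legal successor move in $T$, and by \false\ otherwise, which correctly forbids moving off the tree.

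First I would fix the forward direction. Given an accepting run $R=\Tree[R]$ of $\At$ over $T$ from $x_0$, I would define a run $R'$ of $\At\times T_{x_0}$ by relabeling each node $r$ of $R$: if $\Lab[R]{r}=(x,s)$, set the new label to be the product state $(s,x)$. Condition~1 of the run definition transfers immediately, since $\Lab[R]{\eword}=(x_0,s_0)$ becomes the product initial state $(s_0,x_0)$. For condition~2, at a node $r$ with $\Lab[R]{r}=(x,s)$ the successors of $r$ in $R$ pick out a set satisfying $\delta(s,\lab[T](x))$ whose directions $c_i$ all satisfy $x\conc c_i\in\Dom[T]$; applying $\Theta_x$ shows that the corresponding set of product atoms $(s_i,x\conc c_i)$ satisfies $\delta'((s,x),a)=\Theta_x(\delta(s,\lab[T](x)))$, because $\Theta_x$ never sends a used atom to \false. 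So $R'$ is a legitimate run.

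For the converse, I would run the relabeling backwards. A run $R'$ of $\At\times T_{x_0}$ has each node labeled by some $(s,x)$; define $R$ by labeling that node $(x,s)$. The only point requiring a word is that the second component genuinely tracks movement in $T$, i.e.\ that a node labeled $(s,x)$ with a successor labeled $(s',x')$ forces $x'=x\conc c$ for some direction $c$ with $x\conc c\in\Dom[T]$ and $(c,s')$ appearing in $\delta(s,\lab[T](x))$; this is guaranteed because $\Theta_x$ only ever produces atoms of the form $(t,x\conc c)$ with $x\conc c\in\Dom[T]$, so every atom surviving in $\delta'((s,x),a)$ corresponds to an admissible transition of $\At$ on $T$. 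Hence $R$ satisfies both run conditions.

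Finally I would check that the bijection preserves the acceptance condition, which is the only genuinely delicate point. The acceptance set of the product is $\alpha\times\Dom[T]$, so a product state $(s,x)$ is accepting iff $s\in\alpha$; thus an infinite path of $R'$ visits $\alpha\times\Dom[T]$ exactly when the corresponding path of $R$ visits $\alpha$, and the two paths traverse the same sequence of weakness classes (inherited componentwise from $S$). Since weak acceptance depends only on which $S_i$ a path is ultimately trapped in, and this is determined by the $S$-component alone, a path of $R$ is accepting iff the matching path of $R'$ is; therefore $R$ is accepting iff $R'$ is. The main obstacle I anticipate is not any single step but making the weakness/trapping argument airtight: one must confirm that the product inherits a weak acceptance structure with the same partial order, so that ``ultimately trapped in $S_i$'' on the $\At$ side corresponds exactly to ``ultimately trapped in $S_i\times\Dom[T]$'' on the product side, and that the finiteness or infiniteness of runs is respected. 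Once this correspondence of infinite paths and their eventual classes is verified, the equivalence $x_0$ is selected by $\At$ from $T$ iff $\At\times T_{x_0}$ accepts $a^\omega$ follows, and since this is the content of Proposition~3.2 of~\cite{KuVW00} transported to our finite-tree setting, I would note that only the bookkeeping of directions via $\Theta_x$ is new here.
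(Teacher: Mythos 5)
Your proof is correct. Note that the paper does not actually prove this proposition: it simply observes that it is ``essentially a restatement of Proposition~3.2 in \cite{KuVW00}'', so your argument supplies the proof that the citation delegates, and it is precisely the one underlying that result: the product construction hard-wires $T$ and $x_0$ into the state space, so relabeling each run node $(x,s)$ as the product state $(s,x)$ gives a bijection between runs of $\At$ over $T$ from $x_0$ and runs of $\At\times T_{x_0}$ on $a^\omega$. You correctly isolate and discharge the two points that need checking: first, that $\Theta_x$ mediates the transition conditions faithfully in both directions --- used atoms are never sent to $\false$ in the forward direction (since the run condition guarantees $x\conc c_i\in\Dom[T]$), and every atom surviving in $\delta'((s,x),a)$ pulls back to an admissible move of $\At$ in the backward direction; and second, that the product is again a weak automaton, with partition $\{S_i\times\Dom[T]\}$ and the inherited partial order, so that a path of the run is ultimately trapped in $S_i$ iff the matching product path is trapped in $S_i\times\Dom[T]$, and acceptance of a product state $(s,x)$ depends only on whether $s\in\alpha$. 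Nothing further is needed; the only aspect specific to this paper's setting --- two-way moves, runs starting at an arbitrary node $x_0$, and possibly infinite runs over a finite tree --- is exactly the bookkeeping your $\Theta_x$ analysis covers.
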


The advantage of Proposition~\ref{prop:one-letter} is that it reduces the
acceptance problem to the question of whether $\At\times T_{x_0}$ accepts
$a^\omega$. This problem is referred to in \cite{KuVW00} as the ``one-letter
nonemptiness problem''. It is shown there that this problem can be solved in
time that is linear in the size of $\At\times T_{x_0}$ by an algorithm that
imposes an evaluation of and-or trees over a decomposition of the automaton
state space into maximal strongly connected components, and then analyzes these
strongly connected components in a bottom-up fashion.  The result
in \cite{KuVW00} is actually stronger; the algorithm there computes in linear
time the set of states from which the automaton accepts $a^\omega$, that is,
the states that yield acceptance if chosen as initial states.
We therefore obtain the following result about the acceptance problem.

\begin{theorem}\label{thm:2WATA-query-evaluation}
  Given a 2WATA $\At$ and a labeled tree $T$, we can compute the set of nodes
  selected by $\At$ from $T$ in time that is linear in the product of the sizes
  of $\At$ and $T$.
\end{theorem}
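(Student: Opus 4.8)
The plan is to leverage Proposition~\ref{prop:one-letter} together with the strengthened form of the one-letter nonemptiness algorithm of~\cite{KuVW00}. The crucial observation is that the transition function $\delta'$ of the product automaton $\At\times T_{x_0}$ does not depend on the chosen node $x_0$ at all: only the initial state $(s_0,x_0)$ does. Hence, rather than forming a separate product for each node, I would form a single product structure over the state space $S\times\Dom$, equipped with the transition function $\delta'$ and the acceptance condition $\alpha\times\Dom$, exactly as described just before Proposition~\ref{prop:one-letter}. This structure has size linear in the product of the sizes of $\At$ and $T$, since $|S\times\Dom|=|S|\cdot|\Dom|$ and each transition $\delta'((s,x),a)$ is obtained from $\delta(s,\ell^T(x))$ by the substitution $\Theta_x$, which does not increase its size.

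First I would run the linear-time algorithm of~\cite{KuVW00} on this product, viewed as a weak alternating word automaton running over the infinite word $a^\omega$. As recalled in the text preceding the statement, that algorithm is in fact stronger than a mere acceptance test: in time linear in the size of the input automaton it computes the entire set of states from which $a^\omega$ is accepted, i.e., those states that yield acceptance when taken as the initial state. Let $W\incl S\times\Dom$ denote this computed set.

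Then, by Proposition~\ref{prop:one-letter}, a node $x$ is selected by $\At$ from $T$ if and only if $\At\times T_x$ accepts $a^\omega$, which holds if and only if the product automaton started in state $(s_0,x)$ accepts $a^\omega$, i.e., if and only if $(s_0,x)\in W$. The desired set of selected nodes is therefore $\{x\in\Dom\mid (s_0,x)\in W\}$, extracted from $W$ by a single linear scan. The total running time is linear in $|S\times\Dom|$, hence linear in the product of the sizes of $\At$ and $T$.

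The main point to get right --- and the step I expect to carry the argument --- is precisely the observation that $\delta'$ is independent of $x_0$, so that one run of the algorithm over the common state space answers the acceptance question simultaneously for every candidate node. A naive reading of Proposition~\ref{prop:one-letter} would suggest solving $|\Dom|$ separate one-letter nonemptiness instances, giving a quadratic bound; the linear bound hinges on the fact that these instances share the same underlying automaton and differ only in their initial state, so that the strengthened algorithm of~\cite{KuVW00}, which returns all accepting initial states at once, resolves them all in a single pass.
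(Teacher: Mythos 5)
Your proposal is correct and matches the paper's own proof essentially verbatim: both exploit the fact that the transition function $\delta'$ of the product automaton is independent of the start node $x_0$, build a single product over $S\times\Dom$, invoke the strengthened linear-time algorithm of~\cite{KuVW00} that returns all states accepting $a^\omega$, and then read off the selected nodes as those $x$ with $(s_0,x)$ in that set. Your closing remark about avoiding the quadratic bound from $|\Dom|$ separate nonemptiness tests is exactly the point the paper's proof is making.
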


\begin{proof}
We constructed above the product automaton $\At\times T_{x_0} =
(\L_0,S\times\Delta^T,(s_0,x_0),\delta',\alpha\times\Delta^T)$.
Note that the only place in this automaton where $x_0$ plays a role
is in the initial state $(s_0,x_0)$. That is, replacing the initial
state by $(s_0,x)$ for another node $x\in\Delta^T$ gives us the product
automaton $\At\times T_{x}$.  As pointed out above, the bottom-up algorithm
of \cite{KuVW00} actually computes the set of states from which the automaton
accepts $a^\omega$. Thus, $x$ is selected by $\At$ from $T$ iff the state
$(s_0,x)$ of the product automaton is accepting. That is, to compute the
set of nodes of $T$ selected by $\At$, we construct the product automaton,
compute states from which the automaton accepts, and then select all nodes
$x$ such that the automaton accepts from $(s_0,x)$.
\end{proof}
Thus, Theorem~\ref{thm:2WATA-query-evaluation} provides us with a
query-evaluation algorithms for 2WATA queries, which is linear both in the size
of the tree and in the size of the automaton.

\subsection{The Nonemptiness Problem}
\label{sec:2WATA-to-NTA}

The \emph{nonemptiness problem} for 2WATAs consists in determining, for a
given 2WATA $\At$ whether it accepts some tree $T$ from $\eword$.
This problem is solved in~\cite{Vard98} for 2WATAs (actually, for
a more powerful automata model) over infinite trees, using rather
sophisticated automata-theoretic techniques. Here we solve this problem
over finite trees, which requires less sophisticated techniques,
and, consequently, is much easier to implement.

In order to decide non-emptiness of 2WATAs, we resort to a conversion to
standard one-way nondeterministic tree automata~\cite{CDGJ*02}.
A one-way nondeterministic tree automaton (NTA) is a tuple
$\At=(\L, S, s_0,\delta)$, analogous to a 2WATA, except that
\myi~the acceptance condition $\alpha$ is empty and has been dropped from
the tuple, \myii~the directions $-1$ and $0$ are not used in $\delta$ and,
\myiii~for each state $s\in S$ and letter $a\in\L$, the positive Boolean
formula $\delta(s,a)$, when written in DNF, does not contain a disjunct
with two distinct atoms $(c,s_1)$ and $(c,s_2)$ with the same direction~$c$.
In other words, each disjunct corresponds to sending at most one
``subprocess'' in each direction. We also allow an NTA to have a \emph{set}
of initial states, requiring that starting with \emph{one} initial state
must lead to acceptance.

While for 2WATAs we have separate input
tree and run tree, for NTAs we can assume that the run of the automaton
over an input tree $T=\Tree[T]$ is an $S$-labeled tree
$R=(\Dom[T],\lab[R])$, which has the same underlying tree
as $T$, and thus is finite, but is labeled by states in~$S$.
Nonemptiness of NTAs is known to be decidable~\cite{Done65}.
As shown there, the set $Acc$ of states of an NTA that leads to acceptance
can be computed by a simple fixpoint algorithm:
\begin{enumerate}[(1)]
\item Initially: $Acc=\emptyset$.
\item At each iteration: $Acc:= Acc \cup\{s \mid \alpha_{Acc} \models
   \delta(s,a)\text{ for some }a\in\L\}$, where $\alpha_X$ is the truth
  assignment that maps $(c,s)$ to true precisely when $s\in X$,
\end{enumerate}
It is known that such an algorithm can be implemented to run in linear time
\cite{DoGa84}.  Thus, to check nonemptiness we compute $Acc$ and check that it
has nonempty intersection with the set of initial states.

It remains to describe the translation of 2WATAs to NTAs.  Given a 2WATA $\At$
and an input tree $T$ of branching degree $k$, let
$\Tau= 2^{S \times [-1..k] \times S}$; that is, an element of $\Tau$ is a set
of transitions of the form $(s,i,s')$.  A \emph{strategy for $\At$ on} $T$ is a
mapping $\tau : \Delta^T \rightarrow \Tau$. Thus, each label in a strategy is
an edge-$[-1..k]$-labeled directed graph on $S$.  For each label
$\zeta\subseteq S \times [-1..k] \times S$, we define
$\sta(\zeta)=\{u \mid (u,i,v)\in \zeta\}$, i.e., $\sta(\zeta)$ is the set
of sources in the graph $\zeta$.  In addition, we require the following:
\begin{enumerate}[(1)]
\item
for each node $x\in\Delta^T$ and each state $s\in\sta(\tau(x))$,
the set $\{(c,s') \mid (s,c,s')\in\tau(x)\}$ satisfies $\delta(s,\ell^T(x))$
(thus, each label can be viewed as a strategy of satisfying the transition
function), and
\item
for each node $x\in\Delta^T$, and each edge $(s,i,s')\in \tau(x)$,
we have that $s'\in\sta(\tau(x\conc i))$.
\end{enumerate}

A \emph{path} $\beta$ in the strategy $\tau$ is a maximal sequence
$(u_0,s_0),(u_1,s_1),\ldots$ of pairs from $\Delta^T \times S$
such that $u_0=\varepsilon$ and, for all $i \geq 0$, there is
some $c_i\in [-1..k]$ such that $(s_i,c_i,s_{i+1})\in\tau(u_i)$ and
$u_{i+1} = u_i \conc c_i$. Thus, $\beta$ is obtained by following
transitions in the strategy. The path $\beta$ is accepting if the
path $s_0,s_1,\ldots$ is accepting.
The strategy $\tau$ is \emph{accepting} if all its paths are accepting.

\begin{proposition}[\cite{Vard98}] \label{tree1}
  A 2WATA $\At$ accepts an input tree $T$ from $\varepsilon$
  iff $\At$ has an accepting strategy for $T$.
\end{proposition}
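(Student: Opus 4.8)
The plan is to view both objects as witnesses for the same acceptance game on $T$, in which the existential player resolves the disjunctions in each $\delta(s,\ell^T(x))$ and the universal player follows the resulting conjunctive branching: an accepting run is then a \emph{history-dependent} winning strategy for the existential player (one play per branch of the run), whereas an accepting strategy $\tau$ is a \emph{positional} one. I would prove the two implications separately.

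For the direction from strategies to runs, I would unfold an accepting $\tau$ into a run $R=\Tree[R]$, built top-down while maintaining the invariant that whenever a node $r$ of $R$ carries a label $(x,s)$ one has $s\in\state(\tau(x))$. The root is labeled $(\eword,s_0)$, and a node $r$ labeled $(x,s)$ receives one successor, labeled $(x\conc c,s')$, for each pair $(c,s')$ with $(s,c,s')\in\tau(x)$. Strategy condition~(1) says precisely that the set of these pairs satisfies $\delta(s,\ell^T(x))$, so the run condition holds at $r$, while condition~(2) gives $s'\in\state(\tau(x\conc c))$ and so reinstates the invariant. Every infinite branch of $R$ is, by construction, a path of $\tau$ in the sense of the definition, so all its infinite branches are accepting; hence $R$ witnesses that $\At$ accepts $T$.

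For the converse I would fold an accepting run $R$ into a positional strategy. For each pair $(x,s)$ that labels at least one node of $R$ I pick a single witnessing node and let $\tau(x)$ collect the edges $(s,c,s')$ recording the successor choices made there; pairs occurring in no label contribute nothing to any $\tau(x)$. Conditions~(1) and~(2) are then immediate, since the successors chosen at the witness satisfy $\delta(s,\ell^T(x))$ by the run condition and each chosen successor pair $(x\conc c,s')$ again occurs in $R$. The subtle point, and the one I expect to carry all the content, is preserving acceptance: a path of $\tau$ need not be a path of $R$, because after leaving the witness for $(x,s)$ along a chosen edge to $(x\conc c,s')$, the next step consults the witness for $(x\conc c,s')$, which may be a different node making a different choice.

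Here the weakness of $\At$ is essential, and I would reduce exactly to it. By weakness the states along any infinite path descend non-increasingly through the partial order on the blocks $S_i$, so every infinite path of $\tau$ stabilizes in a single block $S_m$ and is accepting iff $S_m\incl\alpha$; the same holds for branches of $R$. The claim to establish is thus that the existential player, having some winning strategy (the accepting run $R$), also has a \emph{positional} one, and this is exactly the memoryless determinacy of the underlying parity game into which a weak game embeds. Since $T$ is finite, $\Delta^T\times S$ is finite, so this positional strategy exists and can be taken as $\tau$, its acceptance being inherited from that of $R$. I would therefore treat the unfolding and conditions~(1)--(2) as routine and devote the effort to the positional-extraction argument, citing the automata- and game-theoretic techniques of \cite{Vard98} as the crux.
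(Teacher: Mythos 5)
Your proposal is correct, and it is essentially the proof behind the paper's citation: the paper itself does not prove Proposition~\ref{tree1} but defers to \cite{Vard98}, where the argument is exactly the one you give --- unfolding a (root-anchored) strategy into a run is routine, and the converse is memoryless determinacy applied to the acceptance game, whose weak winning condition embeds into a parity condition so that a positional winning strategy for the existential player exists and can be read off as $\tau$. The only point worth making explicit is the anchoring requirement $s_0\in\state(\tau(\eword))$, which your unfolding uses implicitly and which the paper only states later (Claim~\ref{tree3}); without it an empty strategy would vacuously be ``accepting.''
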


We have thus succeeded in defining a notion of run for alternating
automata that will have the same tree structure as the input tree.
We are still facing the problem that paths in a strategy tree can
go both up and down. We need to find a way to restrict attention
to uni-directional paths. For this we need an additional concept.

Let $\Eta$ be the set of relations of the form $S \times \{0,1\}\times S$.
Thus, each element in $\Eta$ is an edge-$\{0,1\}$-labeled
directed graph on $S$.
An \emph{annotation} for $\At$ on $T$ with respect to a strategy $\tau$
is a mapping $\eta : \Delta^T \rightarrow 2^{S \times \{0,1\}\times S}$.
Edge labels need not be unique; that is,
an annotation can contain both triples $(s,0,s')$ and $(s,1,s')$.
We require $\eta$ to satisfy some  closure conditions for each
node $x\in \Delta^T$. Intuitively, these conditions say that
$\eta$ contains all relevant information about finite paths in
$\tau$. Thus, an edge $(s,c,s')$ describes a path from $s$ to $s'$,
where $c=1$ if this path goes through~$\alpha$.  The conditions are:
\begin{enumerate}
\item
if $(s,c,s')\in \eta(x)$ and $(s',c',s'')\in\eta(x)$, then
$(s,c'',s'')\in\eta(x)$ where $c''=\max\{c,c'\}$,
\item
if $(s,0,s')\in \tau(x)$ then $(s,c,s')\in\eta(x)$,
where $c=1$ if $s'\in\alpha$ and $c=0$ otherwise,
\item
if $y=x\conc i$, $(s,i,s')\in\tau(x)$, $(s',c,s'')\in\eta(y)$, and
$(s'',-1,s''')\in\tau(y)$, then $(s,c',s''')\in\eta(x)$,
where $c'=1$ if $s\in\alpha$, $c=1$, or $s'''\in\alpha$,
and $c'=0$ otherwise.
\item
if $x=y\conc i$, $(s,-1,s')\in\tau(x)$, $(s',c,s'')\in\eta(y)$, and
$(s'',i,s''')\in\tau(y)$, then $(s,c',s''')\in\eta(x)$,
where $c'=1$ if either $s'\in\alpha$, $c=1$, or $s'''\in\alpha$,
and $c'=0$ otherwise.
\end{enumerate}
The annotation $\eta$ is \emph{accepting} if for every node $x\in \Delta^T$ and
state $s\in S$, if $(s,c,s)\in\eta(x)$, then $c=1$.
In other words, $\eta$ is accepting if \emph{all} cycles visit accepting states.

\begin{proposition}[\cite{Vard98}] \label{tree2}
A 2WATA $\At$ accepts an input tree $T$ from $\varepsilon$
iff $\At$ has a anchored strategy $\tau$
on $T$ and an accepting annotation $\eta$ of $\tau$.
\end{proposition}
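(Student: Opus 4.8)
The plan is to prove the two directions by connecting the annotation to Proposition~\ref{tree1}, which already equates acceptance of $T$ with the existence of an \emph{accepting} strategy $\tau$ (one all of whose paths are accepting). Since conditions~(1)--(4) are closure conditions that only add edges, it suffices to show that a strategy $\tau$ is accepting if and only if it admits an accepting annotation. The conceptual core is that the annotation $\eta$ is meant to summarize, locally at each node $x$, every finite ``return path'' of $\tau$ that starts at $(x,s)$, wanders up and down the tree through transitions of $\tau$, and comes back to $x$ in state $s'$; the flag $c$ is designed to record whether such a path meets $\alpha$. I would isolate this as two lemmas about the \emph{least} annotation $\eta_\tau$ closed under (1)--(4): \emph{(Soundness)} every $(s,c,s')\in\eta_\tau(x)$ is witnessed by an actual return path of $\tau$ from $(x,s)$ to $(x,s')$ whose encounter with $\alpha$ is faithfully recorded by $c$; \emph{(Completeness)} conversely, every return path of $\tau$ from $(x,s)$ to $(x,s')$ is summarized by some edge $(s,c,s')\in\eta_\tau(x)$ with the correct flag. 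Both are proved by induction on the length of the path, decomposing a return path according to the direction ($0$, a child $i>0$, or $-1$) of its first move and peeling off the first sub-segment that returns to $x$; the down-move case feeds condition~(3), the up-move case feeds condition~(4), condition~(1) reassembles the pieces, and condition~(2) supplies the base step.

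For the forward direction, suppose $\tau$ is accepting and let $\eta_\tau$ be the least annotation. I would show $\eta_\tau$ is accepting. If not, there is a node $x$ and a state $s$ with $(s,0,s)\in\eta_\tau(x)$. By Soundness this edge is witnessed by a genuine return path of $\tau$ from $(x,s)$ back to $(x,s)$ that avoids $\alpha$ entirely (the bookkeeping of the flag guarantees that $c=0$ forces no visited state, including $s$ itself, to lie in $\alpha$). Concatenating this cycle with itself produces an infinite path of $\tau$ all of whose infinitely-often visited states lie outside $\alpha$; by the weakness condition this path is trapped in a single rejecting block $S_i$ and is therefore rejecting, contradicting the assumption that $\tau$ is accepting. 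Hence $\eta_\tau$ is accepting, and $\tau$ together with $\eta_\tau$ witnesses the right-hand side of the equivalence.

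For the converse, suppose $\tau$ admits an accepting annotation $\eta$. Since $\eta$ satisfies (1)--(4) it contains the least annotation $\eta_\tau$, so $\eta$ too summarizes every return path of $\tau$, with a flag that is $0$ whenever the path avoids $\alpha$. I would show $\tau$ is accepting and then invoke Proposition~\ref{tree1}. Suppose toward a contradiction that $\tau$ has a rejecting infinite path $(u_0,s_0),(u_1,s_1),\dots$. Because the tree is finite, $\Delta^T\times S$ is finite, so some pair $(u,s)$ occurs infinitely often; by weakness the path is eventually trapped in a rejecting block, so beyond some point no visited state lies in $\alpha$. Picking two occurrences of $(u,s)$ past that point, the segment of the path between them is a return path from $(u,s)$ to $(u,s)$ that avoids $\alpha$. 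By Completeness applied to $\eta$, there is an edge $(s,c,s)\in\eta(u)$, and the faithful recording of the flag forces $c=0$. This contradicts $\eta$ being accepting, which demands $c=1$ for every edge of the form $(s,c,s)$. Hence $\tau$ is accepting, and $\At$ accepts $T$ by Proposition~\ref{tree1}.

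The main obstacle is the Soundness/Completeness correspondence, because paths in a strategy are genuinely two-directional: a return path at $x$ may leave $x$ upward, roam through an arbitrary portion of the tree above $x$, and only later descend back, so the summaries at a node and at its parent are mutually dependent (condition~(3) uses the child's summary, condition~(4) uses the parent's). The delicate point is to set up the induction so that each decomposition step strictly shortens the path while the two-way recursion between (3) and (4) stays well-founded, and to verify that the exact placement of the ``$s\in\alpha$'' / ``$s'\in\alpha$'' / ``$s'''\in\alpha$'' tests in conditions (2)--(4) makes $c$ equal to $1$ precisely when the witnessed path touches $\alpha$. This is what turns the global ``all infinite paths are accepting'' requirement into the purely local ``all $\eta$-cycles carry flag~$1$'' condition, exploiting that over a \emph{finite} tree every infinite path must revisit a node-state pair.
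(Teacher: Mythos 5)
The paper contains no proof of Proposition~\ref{tree2}: it is imported verbatim from \cite{Vard98}. Measured against the standard argument behind that citation, your architecture is the right one --- reduce via Proposition~\ref{tree1} to the bridge ``$\tau$ is accepting iff $\tau$ admits an accepting annotation,'' prove soundness/completeness of the least closed annotation with respect to finite return paths by induction on path length, and use finiteness of $\Delta^T\times S$ together with weakness to trade infinite rejecting paths for $\alpha$-free cycles. Two of your steps, however, do not go through as written. First, a mild one: your bridging lemma is false for strategies with ``junk.'' A path of $\tau$ by definition starts at $\eword$, so an $\alpha$-free cycle at a pair $(x,s)$ that lies on no path from the root does not make $\tau$ rejecting, yet it forces a $0$-flagged reflexive edge into the least annotation and hence into \emph{every} annotation (the closure conditions are monotone if--then rules). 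So an accepting strategy need not admit any accepting annotation. The repair is easy --- prune $\tau$ to its root-reachable edges before building $\eta_\tau$, or note that the accepting strategy supplied by Proposition~\ref{tree1} can be taken reachable --- but your forward direction needs this, since ``concatenating the cycle with itself'' only yields a path of $\tau$ after prepending a root prefix, which exists only for reachable cycles.

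Second, and more seriously, the completeness induction you sketch fails on down--up \emph{bounces}. If the rejecting cycle takes $(s,i,s')\in\tau(x)$ and immediately returns via $(s',-1,s'')\in\tau(x\conc i)$ with nothing in between, then condition~(3) cannot fire: it requires an $\eta(x\conc i)$-edge between arrival and departure at the child, and no rule forces one (there is no zero-length summary). Concretely, for a strategy whose only transitions are such a bounce and which has no $0$-moves, conditions (1)--(4) are all vacuous, so $\eta=\emptyset$ qualifies as an annotation and is trivially accepting --- exactly the case your ``peel off the first excursion; the down-move case feeds condition~(3)'' step cannot handle. So your Completeness lemma is not derivable from the conditions as displayed; the induction must additionally seed the length-two excursion summaries (a bounce clause, or equivalently a trivial-middle variant of (3)/(4)). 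A smaller cousin of the same issue: ``$c=0$ forces no visited state to lie in $\alpha$'' is not literally what the bookkeeping provides (condition~(2) ignores the source, and the state entered right after a down-move is recorded only through the child's summary). What rescues the flag semantics is weakness itself: a cyclic path stays within a single block $S_i$, which is either contained in or disjoint from $\alpha$, so its states are all in or all out of $\alpha$, and any rule that ORs a nonempty subset of the memberships computes the correct flag on cycles. That observation deserves to be explicit --- it is precisely where ``weak'' is used, beyond the trapped-block step you do mention.
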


Consider now an \emph{annotated tree} $(\Dom[T],\lab[T],\tau,\eta)$,
where $\tau$ is a strategy tree for $\At$ on $(\Dom[T],\lab[T])$ and
$\eta$ is an annotation of $\tau$. We say that
$(\Dom[T],\lab[T],\tau,\eta)$ is \emph{accepting} if $\eta$ is accepting.

\begin{theorem}\label{thm-annotated-trees}
Let $\At$ be a 2WATA. Then there is an NTA $\At^n$ such that
$\lang(\At)=\lang(\At^n)$.  The number of states of $\At_n$ is at most
exponential in the number of states of $\At$.
\end{theorem}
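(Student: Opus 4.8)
The plan is to build the NTA $\At_n$ so that its runs over an input tree $T$ are exactly the accepting annotated trees $(\Dom[T],\lab[T],\tau,\eta)$ of $\At$, invoking Proposition~\ref{tree2} to conclude that nonemptiness is preserved. Concretely, I would take the state set of $\At_n$ to be the set of pairs $(\zeta,\epsilon)$, where $\zeta\in\Tau=2^{S\times[-1..k]\times S}$ is a strategy label and $\epsilon\in\Eta=2^{S\times\{0,1\}\times S}$ is an annotation label. Since $|\Tau|$ and $|\Eta|$ are each $2^{O(|S|^2)}$, the number of states is at most exponential in $|S|$, giving the stated bound. The initial states are those $(\zeta,\epsilon)$ whose strategy component can start a run at the root, i.e.\ with $s_0\in\state(\zeta)$ and with $\zeta$ containing no $(-1)$-edges at the root (as the root has no predecessor); the NTA over letter $\ell^T(x)$ must also verify the local consistency of $\zeta$ against $\delta(s,\ell^T(x))$ for every $s\in\state(\zeta)$, which is condition~(1) of a strategy.

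The heart of the construction is the transition function of $\At_n$, which must enforce \emph{locally} all the global conditions on strategies and annotations. The strategy linking condition~(2)---that every target $s'$ of an $i$-edge at $x$ lies in $\state$ of the label at $x\conc i$---is naturally a downward constraint, so I would have $\delta_n$ require, for each outgoing edge $(s,i,s')$ in the current $\zeta$ with $i\in\{1,\dots,k\}$, that the successor chosen in direction $i$ have $s'$ in its $\state$. The annotation closure conditions~(1)--(4) mix information at $x$, at a child $x\conc i$, and across the $(-1)$-edges coming back up; conditions~(1) and~(2) are purely local to the label at $x$ and can be checked directly, while conditions~(3) and~(4) couple the label at $x$ with the label at a child $y=x\conc i$. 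I would fold these parent/child coupling conditions into $\delta_n$ by guessing the child labels and checking, before descending, that every required composed edge mandated by~(3) (using a child's $\eta(y)$ together with $\tau(x)$ and $\tau(y)$) is already present in $\epsilon$ at $x$, and dually that condition~(4), which constrains the child's annotation in terms of the parent's, is met. The acceptance check---that no $\epsilon(x)$ contains a self-loop $(s,0,s)$---is a state-local predicate, so I would simply forbid any state $(\zeta,\epsilon)$ with such a loop from appearing at all, i.e.\ restrict the state set to annotation labels all of whose cycles pass through $\alpha$.

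The main subtlety, and the step I expect to be the real obstacle, is that the annotation conditions are not one-directional: condition~(3) pushes information \emph{up} from a child's annotation into the parent, whereas an NTA run is evaluated top-down with each node's constraints depending only on its own label and (through $\delta_n$) its children's labels. This is precisely why the machinery of Propositions~\ref{tree1} and~\ref{tree2} was developed---to repackage the bidirectional paths of a 2WATA into a strategy-plus-annotation that an NTA can verify with purely downward-propagating transitions. The key point making this work is that $\eta$ by design already summarizes all finite-path information, so the NTA never needs to look below its immediate children: conditions~(1)--(4) each reference at most a node and one of its neighbors, so a correctly guessed annotation can be validated edge-by-edge within $\delta_n$. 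I would therefore prove correctness in two directions: an accepting run of $\At_n$ on $T$ yields a strategy $\tau$ and accepting annotation $\eta$ satisfying all four closure conditions, whence by Proposition~\ref{tree2} $\At$ accepts $T$; and conversely any $\tau,\eta$ witnessing acceptance of $\At$ on $T$ can be read off as an accepting run of $\At_n$. The NTA one-successor-per-direction restriction is met because each label commits to a single $(\zeta,\epsilon)$ per child direction, so $\delta_n$ in DNF sends at most one subprocess downward in each direction.
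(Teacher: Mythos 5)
Your proposal follows essentially the same route as the paper's proof: an NTA with state space $\Tau\times\Eta$ that guesses the strategy and annotation labels node by node, checks the strategy conditions and the annotation closure conditions locally in the transition function (with the parent/child coupling conditions folded into the guessed child labels, exactly as in the paper's $\rho(r,R,a,i)$), and concludes via Proposition~\ref{tree2}. The only cosmetic difference is that you exclude states whose annotation component contains a rejecting self-loop, whereas the paper enforces the same condition inside the transition relation; these are interchangeable.
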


\begin{proof}
The proof follows by specializing the construction in \cite{Vard98} to
2WATAs on finite trees.

Let $\At=(\L,S,s_0,\delta,\alpha)$ and let the input tree be $T=\Tree[T]$.  The
automaton $\At^n$ guesses mappings $\tau: \Delta^T \rightarrow \Tau$
and $\eta: \Delta^T \rightarrow \Eta$ and checks that $\tau$ is a strategy for
$\At$ on $T$ and $\eta$ is an accepting annotation for $\At$ on $T$ with respect
to $\tau$. The state space of $\At_n$ is $\Tau \times \Eta$;
intuitively, before reading the label of a node $x$, $\At_n$ needs to be in
state $(\tau(x),\eta(x)$. The transition function of $\At_n$ checks that
$state$, $\tau$, and $\eta$ satisfies all the required conditions.

Formally, $\At^n=(\L,Q,Q_0,\rho)$, where
\begin{itemize}
\item
$Q=\Tau \times \Eta$,
\item
We first define transitions
$\rho(P,r,a,i) \rightarrow \Tau\times\Eta$:

We have that $(r',R')\in\rho((P,r,R),a,i$ if
\begin{enumerate}
\item
if $(s,i,s)\in r$, then $s'\in state(r')$,
\item
if $(s,-1,s)\in r'$, then $s'\in state(r)$,
\item
if $(s,c,s)\in R$, then $c=1$,
\item
for each $s\in state(r)$,
the set $\{(c,s'): (s,c,s)\in r\}$ satisfies $\delta(s,a)$,
\item
if $(s,c,s')\in R'$ and $(s',c',s'')\in R'$, then
$(s,c'',s'')\in R'$ where $c''=\max\{c,c'\}$,
\item
if $(s,0,s')\in r$, then $(s,c,s')\in R$,
where $c=1$ if $s'\in\alpha$ and $c=0$ otherwise,
\item
if $(s,i,s')\in r$, $(s',c,s'')\in R'$, and
$(s'',-1,s''')\in r'$, then $(s,c',s''')\in r$,
where $c'=1$ if $s\in\alpha$, $c=1$, or $s'''\in\alpha$,
and $c'=0$ otherwise.
\item
if $(s,-1,s')\in r'$, $(s',c,s'')\in R$, and
$(s'',i,s''')\in r$, then $(s,c',s''')\in R'$,
where $c'=1$ if either $s'\in\alpha$, $c=1$, or $s'''\in\alpha$,
and $c'=0$ otherwise.
\end{enumerate}
Intuitively, the transition function $\rho$ checks that all conditions
on the strategy and annotation hold, except for the condition on the
strategy at the root.

We now define
$\rho(R,r,a)=\bigvee_{1\leq i \leq k} \bigwedge{(R',r')\in\rho(R,r,a,i)} (R',r'.i)$.
If, however, we have that
$(\emptyset,\emptyset,\emptyset)\in\rho((P,R,r),a,i$ for all $1\leq i \leq k$, then we define
$\rho(R,r,a)={\bf true}$.
\item
The set of initial states is
$Q_0=\{(r,R): s_0\in state(r) \mbox{ and there is no transition} (s,-1,s')\in r\}$.
\end{itemize}
It follows from the argument in \cite{Vard98} that $\At^n$ accepts a tree $T$
iff $\At$ has a strategy tree on $T$ and an accepting annotation of that
strategy.
\end{proof}

We saw earlier that nonemptiness of NTAs can be checked in linear time.
From Proposition~\ref{tree2} we now get:

\begin{theorem}
\label{thm-2WATA-nonemptiness}
Given a 2WATA $\At$ with $n$ states and an input alphabet with $m$~elements,
deciding nonemptiness of $\At$ can be done in time exponential in $n$ and
linear in $m$.
\end{theorem}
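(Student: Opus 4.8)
The plan is to reduce nonemptiness of the 2WATA to nonemptiness of an equivalent NTA and then invoke the linear-time NTA nonemptiness test established above. Concretely, I would first apply Theorem~\ref{thm-annotated-trees} to obtain an NTA $\At_n$ with $\lang(\At)=\lang(\At_n)$. Since $\At$ has $n$ states and runs over trees of constant branching degree $k$, the state space $Q=\Tau\times\Eta$ of $\At_n$ has size $|\Tau|\cdot|\Eta|=2^{n^2(k+2)}\cdot 2^{2n^2}=2^{O(n^2)}$, i.e., exponential in $n$. Crucially, $\At_n$ keeps the alphabet $\L$ of $\At$ unchanged, so its size depends on the alphabet cardinality $m$ only through a single multiplicative factor.

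Next I would bound the representation size of $\At_n$. The transition function is indexed by pairs $(q,a)\in Q\times\L$, of which there are $|Q|\cdot m$; for each such pair the formula $\rho(r,R,a)=\bigvee_{1\le i\le k}\bigwedge_{(r',R')\in\rho(r,R,a,i)}(i,r',R')$ has size at most $k\cdot|Q|$, since every conjunct is an element of $[1..k]\times Q$. As $k$ is a constant, the total size of $\At_n$ is $O(|Q|^2\cdot m)=2^{O(n^2)}\cdot m$, which is exponential in $n$ and linear in $m$. Building (not merely bounding) this transition function amounts to checking, for each candidate $(r,R,a,i)$ and $(r',R')$, the local strategy and annotation conditions from the proof of Theorem~\ref{thm-annotated-trees}; each is a simple test over the edge-labeled graphs encoded by the labels, and the whole construction stays within the same time bound.

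Finally I would invoke the linear-time nonemptiness procedure for NTAs recalled earlier: the set $Acc$ of states from which $\At_n$ accepts is computed by the monotone fixpoint iteration that adds a state $s$ to $Acc$ whenever $\alpha_{Acc}\models\delta(s,a)$ for some $a\in\L$, runnable in time linear in the size of $\At_n$. Because $\lang(\At)=\lang(\At_n)$, the automaton $\At$ is nonempty iff $Acc$ intersects the set $Q_0$ of initial states of $\At_n$. Composing the construction of $\At_n$ with this test yields a decision procedure of overall running time $2^{O(n^2)}\cdot m$, as claimed.

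The main obstacle I anticipate is the careful size accounting rather than any genuine conceptual difficulty. One must verify that the alphabet enters the bound solely through the factor $m$ (one transition per state per letter, with no further alphabet dependence hidden in the transition formulas) while the exponential blowup remains confined to the state-space factor $2^{O(n^2)}$, and that the exponentially many transitions of $\At_n$ can actually be generated within the stated time rather than only counted. Everything else is a direct combination of Theorem~\ref{thm-annotated-trees} with the already-established linear-time NTA nonemptiness algorithm.
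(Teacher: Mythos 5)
Your proposal is correct and follows essentially the same route as the paper, which obtains Theorem~\ref{thm-2WATA-nonemptiness} precisely by composing the 2WATA-to-NTA translation of Theorem~\ref{thm-annotated-trees} (exponential state blow-up $|\Tau\times\Eta|=2^{O(n^2)}$, alphabet unchanged) with the linear-time fixpoint nonemptiness test for NTAs. Your explicit accounting of the transition-table size and of the fact that $m$ enters only as a multiplicative factor is a faithful elaboration of what the paper leaves implicit.
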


The key feature of the state space of $\At_n$ is the fact that
states are triples consisting of subsets of $S$, $S\times\{0,1\}\times S$,
and $S \times [-1..k] \times S$.
Thus, a set of states of $\At_n$ can be described by a Boolean function on
the domain $S^5 \times [-1..k]$. Similarly, the transition function of $\At_n$
can also be described as a Boolean function. Such functions can be
represented by binary decision diagrams (BDDs)~\cite{Brya86}, enabling a
symbolic implementation of the fixpoint algorithm discussed above.

We note that the framework of~\cite{Vard98} also converts a two-way alternating
tree automaton (on infinite trees) to a nondeterministic tree automaton (on infinite
trees).  The state space of the latter, however, is considerably more complex than
the one obtained her.In fact, the infinite-tree automata-theoretic approach
has so far have resisted attempts at practically efficient
implementation~\cite{ScTW05,TaHB95}, due to the use of Safra's determinization
construction~\cite{Safr88} and parity games~\cite{Jurd00}.
This makes it very difficult in practice to apply the symbolic approach in the
infinite-tree setting.

As shown in~\cite{Marx04b}, reasoning over \RXPath formulas can be reduced to
checking satisfiability in \emph{Propositional Dynamic Logics (PDLs)}.  Specifically,
one can resort to \emph{Repeat-Converse-Deterministic PDL (repeat-CDPDL)}, a variant
of PDL that allows for expressing the finiteness of trees and for which satisfiability
is \EXPTIME-complete~\cite{Vard98}.  This upper bound, however, is established using
sophisticated infinite-tree automata-theoretic techniques, which, we just point
out have resisted practically efficient implementations.  The main advantage of our
approach here is that we use only automata on finite trees, which require a much
``lighter'' automata-theoretic machinery. Indeed, symbolic-reasoning-techniques,
including BDDs and Boolean saisfiability solving have been used successfully
for XML reasoning \cite{GeLa10}.  We leave further exploration of this aspect
to future work.



\section{Query Evaluation and Reasoning on \muXPath}
\label{sec:reasoning}

We now exploit the correspondence between \muXPath and 2WATAs we establish the
main characteristics of \muXPath as a query language over sibling trees.  We
recall that sibling trees can be encoded in (well-formed) binary trees in
linear time, and hence we blur the distinction between the two.

\subsection{Query Evaluation}

We can evaluate \muXPath queries over sibling trees by exploiting the
correspondence with 2WATAs, obtaining the following complexity
characterization.

\begin{theorem}\label{thm:muXPath-evaluation}
  Given a (binary) sibling tree $T$ and a \muXPath query $\query$, we can
  compute $\Int[T]{\query}$ in time that is linear in the number of nodes of
  $T$ (data complexity) and in the size of $\query$ (query complexity).
\end{theorem}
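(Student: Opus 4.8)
The plan is to reduce query evaluation for \muXPath to the acceptance problem for 2WATAs, which has already been solved with exactly the right complexity. First I would take the given \muXPath query $\query$ and apply Theorem~\ref{thm:muXPath-to-2WATA} to construct the corresponding 2WATA $\At_{\query}$. By Item~1 of that theorem, the number of states of $\At_{\query}$ is linear in the size of $\query$, and by Item~2, for every binary sibling tree $T$ and every node $x$, we have $x\in\Int[T]{\query}$ if and only if $\At_{\query}$ selects $x$ from the well-formed binary tree $\pi_b(T)$. Since the excerpt establishes that sibling trees are encoded into well-formed binary trees in linear time and blurs the distinction between the two, this sets up the evaluation of $\query$ on $T$ as exactly the problem of computing the set of nodes selected by $\At_{\query}$ from $T$.

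Next I would invoke Theorem~\ref{thm:2WATA-query-evaluation}, which states that given a 2WATA $\At$ and a labeled tree $T$, the set of nodes selected by $\At$ from $T$ can be computed in time linear in the product of the sizes of $\At$ and $T$. Applying this to $\At=\At_{\query}$ and $T=\pi_b(T)$ immediately yields the set $\Int[T]{\query}$, and the running time is linear in the product of the size of $\At_{\query}$ and the size of the (binary) tree. The remaining work is purely bookkeeping on how this product bound splits into the two separate linear factors claimed in the statement.

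The main obstacle, and the point that deserves care, is the size of $\At_{\query}$: as the remark following Theorem~\ref{thm:muXPath-to-2WATA} explicitly warns, although the \emph{number of states} of $\At_{\query}$ is linear in $|\query|$, its alphabet is $2^{\Sigma\cup\{\ifc,\irs,\hfc,\hrs\}}$ and hence its full transition function is exponential in $|\query|$. A naive reading of Theorem~\ref{thm:2WATA-query-evaluation} would therefore give an exponential query complexity. The resolution, which I would spell out, is that the product automaton $\At_{\query}\times T_{x_0}$ only ever consults transitions $\delta(s,\ell^T(x))$ for labels $\ell^T(x)$ that actually occur in $T$, as observed in the construction of the product in Section~\ref{sec:2wata}. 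Thus the relevant portion of the transition function that must be materialized has size linear in the number of states (hence in $|\query|$) times the number of nodes of $T$, rather than exponential in $|\query|$. In other words, the size of the product automaton is linear in $|\query|$ and linear in the number of nodes of $T$.

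Putting these pieces together, I would conclude as follows: the product automaton $\At_{\query}\times T$ has size linear in $|\query|$ and linear in $|T|$, and the one-letter nonemptiness algorithm of \cite{KuVW00} underlying Theorem~\ref{thm:2WATA-query-evaluation} runs in time linear in the size of this product. Hence $\Int[T]{\query}$ is computed in time linear in the number of nodes of $T$ (data complexity) and linear in the size of $\query$ (query complexity), giving the claimed bound. I expect the only delicate step to be the justification that the exponential alphabet does not enter the complexity, which is handled exactly by the observation that only the labels appearing in $T$ are ever examined.
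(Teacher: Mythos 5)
Your proposal is correct and follows essentially the same route as the paper's own proof: construct $\At_{\query}$ via Theorem~\ref{thm:muXPath-to-2WATA}, build $\pi_b(T)$ in linear time, and apply Theorem~\ref{thm:2WATA-query-evaluation} through the product automaton. You also correctly identified and resolved the one delicate point---the exponential alphabet of $\At_{\query}$---exactly as the paper does, by observing that the product automaton only ever consults transitions for labels actually occurring in $\pi_b(T)$.
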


\begin{proof}
By Theorem~\ref{thm:muXPath-to-2WATA}, we can construct from $\query$ a 2WATA
$\At_{\query}$ whose number of states is linear in the size of $\query$.  On
the other hand, the well-formed binary tree $\pi_b(T)$ induced by $T$ can be
built in linear time.  By Theorem~\ref{thm:2WATA-query-evaluation}, we can
evaluate $\At_{\query}$ over $\pi_b(T)$ in linear time in the product of the
sizes of $\At_{\query}$ and $\pi_b(T)$ by constructing the product automaton
$\At_{\query}\times\pi_b(T)_x$ (where $x$ is an arbitrary node of $\pi_b(T)$).
Notice that, while the alphabet of $\At_{\query}$ is the powerset of the
alphabet of $\query$, in $\At_{\query}\times\pi_b(T)_x$ only the labels that
actually appear in $\pi_b(T)$ are used, hence the claim follows.
\end{proof}

\subsection{Query Satisfiability and Containment}
\label{sec:satisfiability-containment}

We now turn our attention to query satisfiability and containment.
A \muXPath query $\query$ is \emph{satisfiable} if there is a sibling tree
$\Ts$ and a node $x$ in $\Ts$ that is returned when $\query$ is evaluated over
$\Ts$.
A \muXPath query $\query_1$ is \emph{contained} in a \muXPath query $\query_2$
if for every sibling tree $\Ts$, the query $\query_1$ selects a subset of the
nodes of $\Ts$ selected by $\query_2$.
Checking satisfiability and containment of queries is crucial in several
contexts, such as query optimization, query reformulation, knowledge-base
verification, information integration, integrity checking, and cooperative
answering~\cite{GuUl92,BDHS96,Motr96,CKPS95,LeSa95,LeRo97b,MiSu99}.
Obviously, query containment is also useful for checking equivalence of
queries, i.e., verifying whether for all databases the answer to a query is the
same as the answer to another query.  For a summary of results on query
containment in graph and tree-structured data,
see~\cite{CDLV02b,BjMS11,Barc13}.

Satisfiability of a \muXPath query $\query$ can be checked by checking the
non-emptiness of the 2WATA $\At_{\query}$, after it has been extended in such a
way that it also checks that the accepted binary trees are well-formed (and
hence correspond to binary sibling trees).

Formally, we extend
$\At_{\query}=(\L,S_{\query},s_{\query},\delta_{\query},\alpha_{\query})$ to
the 2WATA $\Atwf_{\query}=(\L,S,\ini,\delta,\alpha)$ defined as follows:
\begin{itemize}
\item The set of states is $S=S_\query\cup \{\ini,\str\}$, where $\ini$ is the
  initial state, and $\str$ is used to check structural properties of
  well-formed trees.
\item The transition function is constituted by all transitions in
  $\delta_{\query}$, plus the following transitions ensuring that
  $\Atwf_{\query}$ accepts only well-formed trees.

\begin{enumerate}
\item \label{enu:transition-initial} For each $\lambda\in\L$, there is a
  transition
  \[
    \delta(\ini,\lambda) ~=~
    \begin{array}[t]{@{}l}
      (0,s_\query) \land
      (0,\str)
    \end{array}
  \]
  Such transitions both move to the initial state of $\At_{\query}$ to verify
  that $\query$ holds at the starting node, and move to state $\str$, from
  which structural properties of the tree are verified.
  %

\item \label{enu:transition-structural} For each $\lambda\in\L$, there is a
  transition
  \[
    \delta(\str,\lambda) ~=~
    \begin{array}[t]{@{}l}
      ((0,\lnot\hfc) \lor ((1,\ifc) \land (1,\lnot\irs) \land (1,\str)))
      \land{}\\
      ((0,\lnot\hrs) \lor ((2,\irs) \land (2,\lnot\ifc) \land (2,\str)))
      \land{}\\
      (((0,\lnot\ifc) \land (0,\lnot\irs)) \lor (-1,\str))
    \end{array}
  \]
  Such transitions check that,
  \textit{(i)}~for a node labeled with $\hfc$, its left child is labeled with
  $\ifc$ but not with $\irs$, and satisfies the same structural property;
  \textit{(ii)}~for a node labeled with $\hrs$, its right child is labeled with
  $\irs$ but not with $\ifc$, and satisfies the same structural property; and
  \textit{(iii)}~ for a node that is not the root, the predecessor satisfies
  the same structural property.

\end{enumerate}

\item The set of accepting states is $\alpha=\alpha_\query\cup\{\str\}$.  The
  states $\ini$ and $\str$ form each a single element of the partition of
  states, where $\{\ini\}$ precedes all other elements, and $\{\str\}$ follows
  them.
\end{itemize}
As for the size of $\Atwf_\query$, by Theorem~\ref{thm:muXPath-to-2WATA}, and
considering that the additional states and transitions in $\Atwf_{\query}$ are
of constant size, which does not depend on $\query$, we get that
the number of states of $\Atwf_{\query}$ is linear in the size of $\query$.

\begin{theorem}
  \label{thm:satisfiability-to-nonemptiness}
  Let $\query$ be a \muXPath query, and $\Atwf_{\query}$ the corresponding
  2WATA constructed as above.  Then $\Atwf_{\query}$ is nonempty if and only if
  $\query$ is satisfiable.
\end{theorem}

\begin{proof}
\OnlyIf %
Let $\Atwf_{\query}$ accept a binary tree $T$ from a node $x$.  Consider the
first node $r$ encountered on the path from $x$ to the root of $T$ in which
neither $\ifc$ nor $\irs$ hold.  Consider now the subtree $T'$ of $T$ rooted at
$r$, and where every subtree rooted at a node in which neither $\hfc$ nor
$\hrs$ holds is pruned away.  By Transitions~\ref{enu:transition-initial},
and~\ref{enu:transition-structural}
in the definition of $\Atwf_{\query}$, we have that $T'$ is well-formed, and
hence we can consider the sibling tree $\Ts=\pi_s(T')$ induced by $T'$.  Since
$T'=\pi_b(\Ts)$ and considering that by construction of $\Atwf_{\query}$, $T'$
is accepted by $\At_{\query}$, by Theorem~\ref{thm:muXPath-to-2WATA}, we have
that $\query$ selects $x$ from $\Ts$, and hence is satisfiable.

\If %
If $\query$ is satisfiable, then there exists a sibling tree $\Ts$ and a node
$x$ in $\Ts$ that is selected by $\query$.  By
Theorem~\ref{thm:muXPath-to-2WATA}, $\At_{\query}$ accepts $\pi_b(\Ts)$ from
$x$, and being $\pi_b(\Ts)$ by construction, also $\Atwf_{\query}$ does so.
\end{proof}

From the above result, we obtain a characterization of the computational
complexity for both query satisfiability and query containment.

\begin{theorem}\label{thm:satisfiability-exptime}
  Checking satisfiability of a \muXPath query is \EXPTIME-complete.
\end{theorem}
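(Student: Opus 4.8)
The plan is to establish \EXPTIME-completeness in two parts, matching the upper and lower bounds. For the upper bound, I would combine the reductions and complexity results already assembled in the excerpt. By Theorem~\ref{thm:muXPath-to-2WATA}, from a \muXPath query $\query$ we build the 2WATA $\At_{\query}$ with a number of states linear in $|\query|$. The auxiliary automaton $\Atwf_{\query}$ constructed above also has a number of states linear in $|\query|$, and by Proposition~\ref{thm:satisfiability-to-nonemptiness} its nonemptiness is equivalent to satisfiability of $\query$. Finally, Theorem~\ref{thm-2WATA-nonemptiness} says nonemptiness of a 2WATA with $n$ states over an alphabet with $m$ elements is decidable in time exponential in $n$ and linear in $m$. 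Here $n$ is linear in $|\query|$ and the alphabet is $2^{\Sigma\cup\{\ifc,\irs,\hfc,\hrs\}}$, so $m$ is exponential in the number of atomic propositions, hence in $|\query|$. The overall running time is therefore (exponential in $|\query|$) times (exponential in $|\query|$), which is still exponential in $|\query|$, giving the \EXPTIME\ upper bound.

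For the lower bound, I would exhibit a reduction from a known \EXPTIME-hard problem into \muXPath satisfiability. The natural source is the satisfiability problem for a modal/dynamic logic over trees that is already known to be \EXPTIME-hard and that embeds into \muXPath via the translation of \RXPath\ shown in Section~\ref{sec:muxpath}. Since the excerpt records that checking query containment (and hence satisfiability, its special case) for \RXPath\ is \EXPTIME-hard~\cite{tCaS08,CDLV09}, and since that section gives a linear-size, semantics-preserving translation $\tau$ from \RXPath\ into \muXPath, I would invoke this translation directly: an \RXPath\ query is satisfiable over sibling trees exactly when its \muXPath\ image is, and the image has size linear in the original. This transfers the \EXPTIME\ lower bound to \muXPath\ satisfiability with no extra work beyond checking that $\tau$ preserves satisfiability, which follows from its preservation of the node-set semantics.

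Putting the two bounds together yields \EXPTIME-completeness. I expect the main obstacle to be the upper-bound bookkeeping rather than any deep argument: one must be careful that the exponential blowup in the alphabet size $m$ does not compound with the exponential in the state count $n$ to push the bound past \EXPTIME. The resolution is that $2^{\text{poly}}\cdot 2^{\text{poly}}=2^{\text{poly}}$, so a single exponential suffices, but this must be stated explicitly because Theorem~\ref{thm-2WATA-nonemptiness} is phrased with $m$ as a separate parameter. A secondary subtlety is confirming that the lower-bound source is hardness of \emph{satisfiability} (or equivalently of emptiness/validity) and not merely of containment; if only containment hardness is available in the cited references, I would instead reduce from the satisfiability of a suitable PDL-style or Description-Logic fragment over finite trees that is known to be \EXPTIME-hard and expressible in \RXPath, and then compose with $\tau$.
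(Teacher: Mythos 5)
Your proposal is correct and follows essentially the same route as the paper's proof: reduce satisfiability to nonemptiness of $\Atwf_{\query}$ via Proposition~\ref{thm:satisfiability-to-nonemptiness} and apply Theorem~\ref{thm-2WATA-nonemptiness}, observing (as you rightly insist must be said explicitly) that exponential in the linear state count times exponential in the alphabet size is still a single exponential, and for hardness encode \RXPath satisfiability into \muXPath via the linear translation $\tau$ of Section~\ref{sec:muxpath}. One caution on your primary lower-bound justification: \cite{tCaS08,CDLV09} give the \EXPTIME\ \emph{upper} bound for \RXPath containment, not hardness, and hardness of containment would not transfer to satisfiability anyway (satisfiability reduces to \emph{non}-containment, so the implication runs the other way) --- but your anticipated fallback is exactly what the paper does, citing \EXPTIME-hardness of \RXPath (PDL-on-trees) satisfiability directly \cite{ABDG*05}.
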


\begin{proof}
For the upper bound, by Theorem~\ref{thm:satisfiability-to-nonemptiness},
checking satisfiability of a \muXPath query $\query$ can be reduced to checking
nonemptiness of the 2WATA $\Atwf_{\query}$.  $\Atwf_{\query}$ has just two
states more than $\At_{\query}$, which in turn, by
Theorem~\ref{thm:muXPath-to-2WATA} has a number of states that is linear in the
size of $\query$ and an alphabet whose size is exponential in the size of the
alphabet of $\query$.  Finally, by Theorem~\ref{thm-2WATA-nonemptiness}
checking nonemptiness of $\Atwf_{\query}$ can be done in time exponential in
its number of states and linear in the size of its alphabet, from which the
claim follows.

For the hardness, it suffices to observe that satisfiability of \RXPath
queries, which can be encoded in linear time into \muXPath (see
Section~\ref{sec:muxpath}), is already \EXPTIME-hard \cite{ABDG*05}.
\end{proof}

\begin{theorem}\label{thm:containment-exptime}
  Checking containment between two \muXPath queries is \EXPTIME-complete.
\end{theorem}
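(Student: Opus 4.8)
The statement to prove is that checking containment between two \muXPath queries is \EXPTIME-complete. The plan is to reduce containment to (un)satisfiability, exactly as one does for modal and description logics, and then invoke Theorem~\ref{thm:satisfiability-exptime}. The key observation is that $\query_1$ is contained in $\query_2$ precisely when there is no sibling tree $\Ts$ and no node $x$ in $\Ts$ such that $x$ is selected by $\query_1$ but not by $\query_2$. In other words, containment fails iff the query ``$\query_1 \land \lnot\query_2$'' is satisfiable.

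\medskip

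\textbf{Upper bound.} First I would make the reduction precise at the level of \muXPath syntax. Given $\query_1 = X_1 : \F_1$ and $\query_2 = X_2 : \F_2$, I would rename variables so that the variable sets of $\F_1$ and $\F_2$ are disjoint, then form a single query whose answer at a node is ``$\query_1$ holds and $\query_2$ fails''. Concretely, introduce a fresh variable $Y$ together with a fresh non-recursive block $\LFP{Y \doteq X_1 \land \lnot X_2}$, and set $\query = Y : (\F_1 \cup \F_2 \cup \{\LFP{Y \doteq X_1 \land \lnot X_2}\})$. Because $Y$ does not occur in $\F_1$ or $\F_2$ and the new block sits at the top of the partial order $\preceq$, the syntactic monotonicity and stratification conditions are preserved, so $\query$ is a well-formed \muXPath query whose size is linear in $|\query_1| + |\query_2|$. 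By the semantics in Figure~\ref{fig:semantics}, $\INT[T]{\query} = \INT[T]{\query_1} \setminus \INT[T]{\query_2}$ for every sibling tree $T$. Hence $\query_1$ is contained in $\query_2$ iff $\query$ is unsatisfiable. By Theorem~\ref{thm:satisfiability-exptime}, satisfiability of $\query$ is decidable in \EXPTIME, and since \EXPTIME\ is closed under complement, so is unsatisfiability; as the construction of $\query$ is linear-time, checking containment is in \EXPTIME.

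\medskip

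\textbf{Lower bound.} For \EXPTIME-hardness, I would reduce satisfiability to containment. A query $\query_1$ is unsatisfiable iff it is contained in a fixed unsatisfiable query, e.g. $\query_2 = Z : \{\LFP{Z \doteq \FALSE}\}$, which selects no node of any tree; thus $\query_1$ is satisfiable iff it is \emph{not} contained in $\query_2$. Since satisfiability is already \EXPTIME-hard by Theorem~\ref{thm:satisfiability-exptime}, and this reduction is trivially linear, containment (and hence its complement) is \EXPTIME-hard.

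\medskip

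\textbf{Main obstacle.} The conceptual content is entirely routine once the satisfiability machinery is in place; the only point requiring care is the \emph{well-formedness} of the combined query $\query$. I expect the bookkeeping around variable disjointness and the stratification order $\preceq$ to be the one place where a careless construction could violate the \muXPath formation rules (for instance, if $\lnot X_2$ were spliced in where $X_2$ occurs only positively elsewhere). The fix is exactly the fresh top block above: negation is applied only to the whole query variable $X_2$, inside a new non-recursive block that no other equation feeds into, so no monotonicity constraint inside $\F_2$ is disturbed. With that handled, the proof reduces to citing Theorem~\ref{thm:satisfiability-exptime} in both directions.
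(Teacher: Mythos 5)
Your proof is correct and follows essentially the same route as the paper: the upper bound uses the identical reduction to unsatisfiability of $X_0:\F_1\cup\F_2\cup\{\LFP{X_0\doteq X_1\land\lnot X_2}\}$ via Theorem~\ref{thm:satisfiability-exptime}, and the lower bound uses the same reduction from unsatisfiability to containment in a trivially false query. Your extra care about variable disjointness, the position of the fresh block in the order $\preceq$, and why $\lnot X_2$ does not violate syntactic monotonicity (since $X_2$ is defined in a different block) is a sound elaboration of what the paper leaves implicit.
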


\begin{proof}
To check query containment $(X_1:\F_1)\subseteq (X_2:\F_2)$, it suffices to
check satisfiability of the \muXPath query
$X_0:\F_1\cup\F_2\cup\{\LFP{X_0=X_1\land\lnot X_2}\}$, where without loss of
generality we have assumed that the variables defined in $\F_1$ and $\F_2$ are
disjoint and different from $X_0$.  Hence, by
Theorem~\ref{thm:satisfiability-exptime}, we get the upper bound.

For the lower bound, it suffices to observe that query $(X_1:\F_1)$ is
unsatisfiable if and only if it is contained in the query
$X_2:\{\LFP{X_0=\false}\}$.
\end{proof}

\subsection{Root Constraints}

Following \cite{Marx04b}, we now introduce \emph{root constraints}, which in
our case are \muXPath formulas intended to be true on the root of the document,
and study the problem of reasoning in the presence of such constraints.
Formally, the root constraint $\varphi$ is \emph{satisfied} in a sibling tree
$\Ts$ if $\eword\in\Int[\Ts]{\varphi}$.
%
A (finite) set $\Gamma$ of root constraints is \emph{satisfiable} if there
exists a sibling tree $\Ts$ that satisfies all constraints in $\Gamma$.
A set $\Gamma$ of root constraints \emph{logically implies} a root constraint
$\varphi$, written $\Gamma\models\varphi$, if $\varphi$ is satisfied in every
sibling tree that satisfies all constraints in $\Gamma$.
%

Root constraints are indeed a quite powerful mechanism to describe structural
properties of documents.  For example, as shown in~\cite{Marx05}, \RXPath (and
hence \muXPath) formulas allow one to express all first-order definable sets of
nodes, and this allows for quite sophisticated conditions as root constraints.
In fact, \muXPath differently from \RXPath~\cite{ABDG*05}, can express
arbitrary MSO root constraints (see Section~\ref{sec:NSTA-to-2WATA}).

Also they allow for capturing XML
\emph{DTDs}\footnote{\url{http://www.w3.org/TR/REC-xml/}} by encoding the
right-hand side of DTD element definitions in a suitable path along the \aright
axis.  We illustrate the latter on a simple example (cf.\
also~\cite{CaDL99c,Marx04b} for a similar encoding).

Consider the following DTD element type definition (using grammar-like
notation, with ``,'' for concatenation and ``$|$'' for union), where $A$ is the
element type being defined, and $C$, $D$, $E$ are element types:
\[
  A ~\lora~ B, (C^*|D), E
\]
The constraint on the sequence of children of an $A$-node that is imposed on an
XML document by such an element type definition, can be directly expressed
through the following \RXPath constraint:
\[
  \BOX{\uu}{(
     A \limp
     \DIAM{\afchild;\TEST{B};
      ((\aright;\TEST{C})^*\cup(\aright;\TEST{D}));
      \aright;\TEST{E}}
     {\BOX{\aright}{\FALSE}})
  }
\]
where $\uu$ is an abbreviation for the path expression
$(\afchild\cup\aright)^*$, and we have assumed to have one atomic proposition
for each element type, and that such proposition are pairwise disjoint (in turn
enforced through a suitable \RXPath constraint).
%
%
Similarly, by means of (\RXPath) root constraints, one can express also
\emph{Specialized DTDs}\footnote{Actually, the result that specialized DTDs
 capture MSO sentences dates back to \cite{That67}, which defined what are
 viewed today as unranked tree automata and showed that they capture
 projections (i.e., specialization) of derivation trees of extended context
 free grammars (i.e., DTDs).}~\cite{PaVi00} and the structural part of XML
Schema Definitions\footnote{\url{http://www.w3.org/TR/xmlschema-1}}
(cf.~\cite{Marx04b}).


\XPath includes identifiers, which are special propositions that hold in a
single node of the sibling tree.  It is easy to see that the following root
constraint $\Nominal{A}$ forces a proposition $A$ to be an identifier:
\[
  \Nominal{A} ~=~  \begin{array}[t]{ll}
    \DIAM{\uu}{A} \land{}  & (1)\\
    \BOX{\uu} (
    \begin{array}[t]{@{}l}
      (\DIAM{\afchild;\uu}{A}\limp
       \BOX{\aright;\uu}{\lnot A}) \land{}\\
      (\DIAM{\aright;\uu}{A}\limp
       \BOX{\afchild;\uu}{\lnot A}) \land{}\\
      (A \limp
       \BOX{(\afchild\cup\aright);\uu}{\lnot A}))
    \end{array} &
    \begin{array}[t]{@{}l}
      (2)\\ (3)\\ (4)
    \end{array}
  \end{array}
\]
In the above constraint, Line~1 expresses that there exists a node of the tree
where $A$ holds.
Line~2 expresses that, if a node where $A$ holds exists in the
$\afchild$ subtree of a node $n$, then $A$ never holds in the
$\aright$ subtree of $n$.
Line~3 is analogous to Line~2, with $\afchild$ and $\aright$ swapped.
Finally, Line~4 expresses that, if $A$ holds in a node $n$, then
it holds neither in the $\afchild$ nor in the $\aright$ subtree of
$n$.

It is immediate to see that every set $\{X_1:\F_1, \ldots X_k:\F_k\}$ of
\muXPath root constraint can be expressed as a \muXPath query that selects only
the root of the tree:
\[
  X_r:\{\LFP{X_r\doteq (\BOX{\afchild^-}{\FALSE}) \land
   (\BOX{\aright^-}{\FALSE}) \land X_1\land\cdots\land X_k}\} \cup
  \F_1\cup\cdots\cup\F_k.
\]
As a consequence, checking for satisfiability and logical implication of root
constraints can be directly reduced to satisfiability and containment of
\muXPath queries.  Considering that satisfiability and logical implication is
already \EXPTIME-hard for \RXPath root constraints \cite{Marx04b}, we get the
following result.

\begin{theorem}
  \label{thm:root-constraints-complexity}
  Satisfiability and logical implication of \muXPath root constraints are
  \EXPTIME-complete.
\end{theorem}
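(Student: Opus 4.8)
The plan is to assemble the theorem from the root-selecting query construction displayed just above together with the two complexity results already proved for \muXPath queries, namely Theorem~\ref{thm:satisfiability-exptime} and Theorem~\ref{thm:containment-exptime}. Both the upper and the lower bounds reduce to known facts, so the work is mostly in verifying that the two reductions faithfully capture the intended "holds at the root" reading of constraints.

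For the satisfiability upper bound I would take a finite set $\Gamma=\{X_1:\F_1,\ldots,X_k:\F_k\}$ of root constraints and encode it as the single \muXPath query $q_\Gamma$ exhibited above. The key observation to verify is that a node satisfies $(\BOX{\afchild^-}{\FALSE})\land(\BOX{\aright^-}{\FALSE})$ exactly when it is neither a first child nor a right sibling, i.e.\ exactly when it is the root of the sibling tree; hence $q_\Gamma$ selects the root precisely when $X_1,\ldots,X_k$ all hold at the root, that is, precisely when $\Ts$ satisfies every constraint in $\Gamma$, and $q_\Gamma$ selects no other node. Consequently $\Gamma$ is satisfiable iff $q_\Gamma$ is satisfiable, and Theorem~\ref{thm:satisfiability-exptime} delivers membership in \EXPTIME.

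For logical implication $\Gamma\models\varphi$ I would encode $\varphi$ as its own root-selecting query $q_\varphi$ in the same way and reduce to query containment. Since each of $q_\Gamma$ and $q_\varphi$ selects either the empty set or the singleton consisting of the root, the inclusion $q_\Gamma\subseteq q_\varphi$ holds over all sibling trees precisely when every tree whose root is selected by $q_\Gamma$ (i.e.\ every model of $\Gamma$) also has its root selected by $q_\varphi$ (i.e.\ is a model of $\varphi$). Thus $\Gamma\models\varphi$ iff $q_\Gamma\subseteq q_\varphi$, and Theorem~\ref{thm:containment-exptime} yields the \EXPTIME\ upper bound. For the matching lower bounds I would invoke that satisfiability and logical implication of \RXPath root constraints are already \EXPTIME-hard~\cite{Marx04b}, together with the linear-time, semantics-preserving translation $\tau$ from \RXPath to \muXPath of Section~\ref{sec:muxpath}: applying $\tau$ to each constraint turns an \RXPath instance into a \muXPath instance of linearly bounded size with the same set of satisfying sibling trees, so the hardness transfers verbatim.

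I do not expect a genuine obstacle here, since the heavy lifting resides in the earlier theorems; the bulk of the routine bookkeeping will be confirming that the composite query $q_\Gamma$ meets the \muXPath well-formedness conditions. Concretely, I would rename variables so that the defined-variable sets of $\F_1,\dots,\F_k$ are pairwise disjoint and distinct from $X_r$, and exhibit a coherent partial order $\preceq$ by keeping each internal order of $\F_i$ and placing the new least-fixpoint block defining $X_r$ above all blocks that define the entry variables $X_1,\dots,X_k$ that its body references. Once this is in place, the root-selection semantics of $q_\Gamma$ and $q_\varphi$ match the intended meaning of the constraints and the two equivalences above close the argument.
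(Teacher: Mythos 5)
Your proposal is correct and follows essentially the same route as the paper, which likewise encodes a set of root constraints as the displayed root-selecting query, reduces satisfiability and logical implication to query satisfiability and containment via Theorems~\ref{thm:satisfiability-exptime} and~\ref{thm:containment-exptime}, and obtains the lower bounds from the \EXPTIME-hardness of \RXPath root constraints \cite{Marx04b} combined with the linear translation of \RXPath into \muXPath. Your additional bookkeeping---checking that $(\BOX{\afchild^-}{\FALSE})\land(\BOX{\aright^-}{\FALSE})$ characterizes the root, renaming variables, and exhibiting the partial order on fixpoint blocks---merely spells out what the paper dismisses as ``immediate to see.''
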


We can also consider query satisfiability and query containment under root
constraints, i.e., with respect to all sibling trees that satisfy the
constraints.  Indeed, a \muXPath query $X_q:\F_q$ can be expressed as the root
constraint:
\[
  X_r:\{\LFP{X_r\doteq X_q\lor (\DIAM{\afchild}{X_r}) \lor
   (\DIAM{\aright}{X_r})}\} \cup \F_q.
\]
Hence, we immediately get the following result.

\begin{theorem}
  \label{thm:query-root-constraints-complexity}
  Satisfiability and containment of \muXPath queries under \muXPath root
  constraints are \EXPTIME-complete.
\end{theorem}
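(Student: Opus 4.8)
The plan is to reduce both problems to the complexity of reasoning about root constraints established in Theorem~\ref{thm:root-constraints-complexity}, by folding queries into root constraints via the encoding displayed immediately above the statement; the matching lower bounds will then follow by specializing the set of constraints to be empty.

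First I would handle \emph{satisfiability under root constraints}. Given a query $q=X_q:\F_q$ and a finite set $\Gamma=\{C_1,\dots,C_m\}$ of \muXPath root constraints, the displayed construction rewrites $q$ as the single root constraint $c_q=X_r:\{\LFP{X_r\doteq X_q\lor\DIAM{\afchild}{X_r}\lor\DIAM{\aright}{X_r}}\}\cup\F_q$. The semantic point to verify is that in a (binary) sibling tree every node is reachable from the root by a sequence of \afchild and \aright steps; hence the least fixpoint $X_r$ holds at the root exactly when the head variable $X_q$ selects \emph{some} node of the tree. Consequently $q$ is satisfiable under $\Gamma$ if and only if the set of root constraints $\Gamma\cup\{c_q\}$ is satisfiable. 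After renaming so that the defining variables of $\F_q$, of $c_q$, and of the $C_i$ are pairwise disjoint, this is a legitimate set of \muXPath root constraints, so Theorem~\ref{thm:root-constraints-complexity} yields the \EXPTIME\ upper bound.

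Next I would reduce \emph{containment under root constraints} to the case just treated. By definition $q_1\subseteq q_2$ under $\Gamma$ fails precisely when some sibling tree satisfying $\Gamma$ has a node selected by $q_1$ but not by $q_2$. Taking the difference query $q'=X_0:\F_1\cup\F_2\cup\{\LFP{X_0\doteq X_1\land\lnot X_2}\}$ used in the proof of Theorem~\ref{thm:containment-exptime}, this is exactly the \emph{un}satisfiability of $q'$ under $\Gamma$. By the previous paragraph, satisfiability of $q'$ under $\Gamma$ lies in \EXPTIME, and since \EXPTIME\ is closed under complementation, the complementary problem --- containment under $\Gamma$ --- is in \EXPTIME\ as well.

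The lower bounds are immediate: taking $\Gamma=\emptyset$ specializes satisfiability (resp.\ containment) under root constraints to ordinary query satisfiability (resp.\ containment), already shown \EXPTIME-hard in Theorems~\ref{thm:satisfiability-exptime} and~\ref{thm:containment-exptime}. I do not expect any deep difficulty here; the care required is purely in the bookkeeping of the reductions. Specifically, one must check that the union of fixpoint blocks coming from $\Gamma$, from the (difference) query, and from the encoding $c_q$ still meets the \muXPath well-formedness conditions --- pairwise disjoint defining variables together with a partial order on blocks respecting the dependencies, with the $X_r$-block placed strictly above the blocks of $\F_q$ that it references. Once this is arranged, correctness hinges solely on the reachability-from-the-root property noted above, which is where I would focus the verification.
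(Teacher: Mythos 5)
Your proposal is correct and takes essentially the same approach as the paper: the paper's own (terse) proof likewise obtains the upper bound from Theorem~\ref{thm:root-constraints-complexity} via the displayed encoding of a query as a root constraint (together, for containment, with the difference-query trick already used in Theorem~\ref{thm:containment-exptime}), and the lower bounds from Theorems~\ref{thm:satisfiability-exptime} and~\ref{thm:containment-exptime} by taking the set of constraints to be empty. Your write-up simply makes explicit the details the paper leaves implicit --- that every node of a binary sibling tree is reachable from the root along $\afchild$/$\aright$ paths, the renaming needed for disjointness of the fixpoint blocks, and the closure of \EXPTIME{} under complementation --- all of which check out.
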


\begin{proof}
The upper bound follows from Theorem~\ref{thm:root-constraints-complexity}.
The lower bound follows from Theorems~\ref{thm:satisfiability-exptime}
and~\ref{thm:containment-exptime}, by considering an empty set of constraints.
\end{proof}

\subsection{View-based Query Processing}

View-based query processing is another form of reasoning that has recently
drawn a great deal of attention in the database community~\cite{Hale00,Hale01}.
In several contexts, such as data integration, query optimization, query
answering with incomplete information, and data warehousing, the problem arises
of processing queries posed over the schema of a virtual database, based on a
set of materialized views, rather than on the raw data in the
database~\cite{AbDu98,Lenz02,Ullm97}.  For example, an information integration
system exports a global virtual schema over which user queries are posed, and
such queries are answered based on the data stored in a collection of data
sources, whose content in turn is described in terms of views over the global
schema.  In such a setting, each data source corresponds to a materialized
view, and the global schema exported to the user corresponds to the schema of
the virtual database.  Notice that typically, in data integration, the data in
the sources are correct (i.e., sound) but incomplete with respect to their
specification in terms of the global schema.  This is due the fact that
typically the global schema is not designed taking the sources into account,
but rather the information needs of users.  Hence it may not be possible to
precisely describe the information content of the sources.  In this paper we
will concentrate on this case (\emph{sound views}), cf.~\cite{Lenz02}.

Consider now a sibling tree that is accessible only through a collection of
views expressed as \muXPath queries, and suppose we need to answer a further
\muXPath query over the tree only on the basis of our knowledge on the views.
Specifically, the collection of views is represented by a finite set $\V$ of
\emph{view symbols}, each denoting a set of tree nodes.  Each view symbol
$V\in\V$ has an associated \emph{view definition} $\query_V$ and a \emph{view
 extension} $\E_V$.
The view definition $\query_V$ is simply a \muXPath query.  The view extension
$\E_V$ is a set of \emph{node references}, where each node reference is either
an identifier, or an explicit path expression that is formed only by chaining
$\afchild$ and $\aright$ and that identifies the node by specifying how to
reach it from the root.  Observe that a node reference $a$ is interpreted in a
sibling tree $\Ts$ as a singleton set of nodes $\Int[\Ts]{a}$.  We use
$\INT[\Ts]{\E_V}$ to denote the set of nodes resulting from interpreting the
node references in $\Ts$.
We say that a \emph{sibling tree $\Ts$ satisfies a view $V$} if
$\INT[\Ts]{\E_V}\subseteq\INT[\Ts]{\query_V}$.  In other words, in $\Ts$ all
the nodes denoted by $\INT[\Ts]{\E_V}$ must appear in $\INT[\Ts]{\query_V}$, but
$\INT[\Ts]{\query_V}$ may contain nodes not in $\INT[\Ts]{\E_V}$.

Given a set $\V$ of views, and a \muXPath query $\query$, the set of
\emph{certain answers} to $\query$ with respect to $\V$ under root constraints
$\Gamma$ is the set $\cert[\query,\V,\Gamma]$ of node references $a$ such that
$\Int[\Ts]{a}\in \Int[\Ts]{\query}$, for every sibling tree $\Ts$ satisfying
each $V\in\V$ and each constraint in $\Gamma$.  \emph{View-based query
 answering} under root constraints consists in deciding whether a given node
reference is a certain answer to $\query$ with respect to $\V$.

View-based query answering can also be reduced to satisfiability of root
constraints.
Given a view $V$, with extension $\E_V$ and definition $X_V:\F_V$, for each
$a\in \E_V$:
\begin{itemize}
\item if $a$ is an identifier, then we introduce the root
  constraint
  \[
    X_a:\{\LFP{X_a\doteq a\land X_V\lor (\DIAM{\afchild}{X_a}) \lor
     (\DIAM{\aright}{X_a})}\} \cup \F_V.
  \]
\item if $a$ is an explicit path expressions $P_1;\cdots;P_n$, then we
  introduce the root constraint
  \[
    X_a:\{\LFP{X_a\doteq \DIAM{P_1}{\cdots\DIAM{P_n}{X_V}}}\}\cup \F_V.
  \]
\end{itemize}
Let $\Gamma_\V$ be the set of \muXPath root constraints corresponding to the
set of \muXPath views $\V$, $\query=X_q:\F_q$ a \muXPath query, and $\Gamma$ a
finite set of root constraints.  Then a node reference $c$ belongs to
$\cert[\query,\V,\Gamma]$ if and only if the following set of root constraints
is unsatisfiable:
\[
  \Gamma \cup \Gamma_\V \cup \Gamma_{id} \cup \Gamma_{\lnot\query},
\]
where $\Gamma_{id}$ consists of one root constraint $N_a$ imposing that $a$
behaves as an identifier (see above), for each node $a$ appearing in $\V$, and:
\begin{itemize}
\item if $c$ is and identifier, then $\Gamma_{\lnot\query}$ is
  \[
    X_c:\{\LFP{X_c\doteq c\land \lnot X_q\lor (\DIAM{\afchild}{X_c}) \lor
     (\DIAM{\aright}{X_c})}\} \cup \F_q.
  \]
\item if $c$ is an explicit path expressions $P_1;\cdots;P_n$, then
  $\Gamma_{\lnot\query}$ is
  \[
    X_c:\{\LFP{X_c\doteq \DIAM{P_1}{\cdots\DIAM{P_n}{\lnot X_q}}}\}\cup \F_q.
  \]
\end{itemize}
Hence we have linearly reduced view-based query answering under root
constraints to unsatisfiability of \muXPath root constraints, and the following
result immediately follows.

\begin{proposition}\label{view-based-query-answering}
  View-based query answering under root constraints in \muXPath is
  \EXPTIME-complete.
\end{proposition}


\section{Relationship among \muXPath, 2WATAs, and MSO}
\label{sec:2wata-mso}


In this section, we show that \muXPath is expressively equivalent to Monadic
Second-Order Logic (MSO)\footnote{Our result is analogous to that by
 \cite{Mate02}, who showed that on acyclic labeled transition systems, the full
 $\mu$-Calculus collapses to the Alternation-Free $\mu$-Calculus, though the
 construction there leads to a higher query complexity.}.  We have already
shown that \muXPath is equivalent to 2WATAs, hence it suffices to establish the
relationship between 2WATAs and MSO.  To do so we make use of nondeterministic
node-selecting tree automata, which were introduced in \cite{FrGK03}, following
earlier work on deterministic node-selecting tree automata in \cite{NeSc02}.
(For earlier work on MSO and Datalog, see \cite{GoKo02,GoKo04}.)  For technical
convenience, we use here top-down, rather than bottom-up automata. It is also
convenient here to assume that the top-down tree automata run on \emph{full}
binary trees, even though our binary trees are not full. Thus, we can assume
that there is a special label $\bot$ such that a node that should not be
present in the tree (e.g, left child of a node that does not contain $\hfc$ in
its label) is labeled by $\bot$.

A \emph{nondeterministic node-selecting top-down tree
automaton} (NSTA) on binary trees is a tuple
$\At=(\L,S,S_0,\delta,F,\sigma)$, where $\L$ is the alphabet of tree labels,
$S$ is a finite set of states, $S_0\subseteq S$ is the initial state set,
$\delta:S\times\L\ra 2^{S^2}$ is the transition function,
$F\subseteq S$ is a set of accepting states, and
$\sigma\subseteq S$ is a set of selecting states. Given a tree
$T=\Tree[T]$, an \emph{accepting run} of $\At$ on $T$ is an $S$-labeled tree
$R=(\Delta^T,\ell^R)$, with the same node set as $T$, where:
\begin{itemize}
\item $\ell^R(\eword)\in S_0$.
\item If $x\in\Delta^T$ is an interior node, then $\langle \ell^R(x\cdot
  1),\ell^R(x\cdot 2)\rangle \in\delta(\ell^R(x),\ell^T(x))$.
\item If $x\in\Delta^T$ is a leaf, then $\delta(\ell^R(x),\ell^T(x)) \cap F^2
  \neq\emptyset$.
\end{itemize}
A node $x\in\Delta^T$ is \emph{selected} by $\At$ from $T$ if there is a
run $R=(\Delta^T,\ell^R)$ of $\At$ on $T$ such that $\ell^R(x)\in\sigma$.
The notion of accepting run used here is standard, cf.\ \cite{CDGJ*02}.
It is the addition of selecting states that turns these trees from a
model of tree recognition to a model of tree querying.

\begin{theorem} {\rm \cite{FrGK03}}
\label{thm:nsta-mso}
\textit{(i)}~For each MSO query $\varphi(x)$, there is an NSTA $\At_\varphi$
such that a node $x$ in a tree $T=(\Delta^T,\ell^T)$ satisfies $\varphi(x)$ iff
$x$ is selected from $T$ by $\At_\varphi$.
\textit{(ii)}~For each NSTA $\At$, there is an MSO query $\varphi_{\At}$ such
that a node $x$ in a tree $T=(\Delta^T,\ell^T)$ satisfies $\varphi_{\At}(x)$
iff $x$ is selected from $T$ by $\At$.
\end{theorem}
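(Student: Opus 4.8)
The plan is to derive both directions from the classical equivalence between MSO-definable and regular tree languages (Thatcher--Wright and Doner), adapted to the node-selecting setting by treating the single free first-order variable as a one-bit marking of the input tree. Throughout I work with full binary trees under the $\bot$ convention stipulated above, so that ``tree language'' means a set of full binary trees over the relevant alphabet and ``regular'' means recognized by an ordinary finite tree automaton; I freely pass between top-down and bottom-up nondeterministic tree automata, which have the same power on ranked trees.

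For direction (ii), from an NSTA $\At=(\L,S,S_0,\delta,F,\sigma)$ to an MSO query, I would directly axiomatize the existence of an accepting run that puts a selecting state at the distinguished node. Introduce one monadic second-order variable $X_s$ for each state $s\in S$, to be read as ``the set of nodes that the run labels $s$.'' The formula $\varphi_{\At}(x)$ asserts, under an existential block quantifying the $X_s$, that: (a) the $X_s$ partition $\Dom[T]$ (each node carries exactly one state); (b) the root lies in $\bigcup_{s\in S_0}X_s$; (c) at every interior node with label $a$ and state $s$ the states of its two children form a pair in $\delta(s,a)$; (d) at every leaf with label $a$ and state $s$ the fixed finite condition $\delta(s,a)\cap F^2\neq\emptyset$ holds; and (e) $x\in\bigcup_{s\in\sigma}X_s$. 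Each of (a)--(e) is first-order over the tree signature augmented with the $X_s$, and the single second-order existential captures ``there exists a run,'' so $\varphi_{\At}(x)$ is the desired MSO query.

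For direction (i), from an MSO query $\varphi(x)$ to an NSTA, I would first absorb the free variable into the alphabet. Encode a pair $(T,y)$ of a tree with a distinguished node as a tree over $\L\times\{0,1\}$ whose second component equals $1$ at exactly $y$; the language $\mathit{Mark}$ of trees with exactly one marked node is regular. Rewriting the free variable as the unary predicate ``this node is marked'' turns $\varphi(x)$ into an MSO sentence $\widehat\varphi$ over $\L\times\{0,1\}$. By the Thatcher--Wright/Doner theorem $\widehat\varphi$ defines a regular tree language, so some top-down tree automaton $\At'$ recognizes $L(\widehat\varphi)\cap\mathit{Mark}$. I then convert $\At'$ into an NSTA $\At_{\varphi}$ over $\L$ by moving the marking bit out of the label and into the state: a state becomes a pair (old state, marking bit), the transitions are inherited from $\At'$ with the bit read off the state rather than the label, and the selecting states are exactly those carrying marking bit $1$. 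Since $\At'$ accepts only trees in $\mathit{Mark}$, every accepting run of $\At_{\varphi}$ marks exactly one node, which is precisely the node it selects; hence $\At_{\varphi}$ selects $y$ from $T$ iff the marking of $T$ at $y$ lies in $L(\widehat\varphi)$, i.e.\ iff $T\models\varphi(x)$.

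The main obstacle is the Thatcher--Wright/Doner step underlying direction (i): the inductive translation of MSO into tree automata. Boolean connectives and the existential set quantifier are handled by closure of regular tree languages under intersection, complement, and projection, but complementation requires determinizing the tree automaton by a subset construction, which is where the unavoidable nonelementary blow-up in the quantifier nesting of $\varphi$ resides and where most of the care is needed. Two setting-specific points must also be verified: that $\mathit{Mark}$ is itself regular and can be intersected in, and that the passage between our not-necessarily-full binary trees (with the $\bot$ convention) and genuinely full binary trees is transparent, so that the marking, projection, and selection steps above are sound in both directions.
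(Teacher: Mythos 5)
Your proof is correct and is essentially the canonical argument behind this result: the paper itself states Theorem~\ref{thm:nsta-mso} without proof, importing it from \cite{FrGK03}, and the proof there (in the Thatcher--Wright/Doner tradition) proceeds exactly as you do --- direction (ii) by existentially quantifying one set variable $X_s$ per state to axiomatize an accepting run with the distinguished node in a selecting state, and direction (i) by absorbing the free first-order variable into the alphabet as a one-bit marking, invoking the MSO-to-tree-automata translation, and then shifting the bit from the label into the state so that the selecting states are precisely those carrying the bit. Your closing caveats (regularity of the exactly-one-mark language, and transparency of the full-binary-tree $\bot$ convention) identify the right routine verifications and nothing in them fails.
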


We now establish back and forth translations between 2WATAs and NSTAs, implying
the equivalence of 2WATAs to MSO.

\subsection{From 2WATAs to NSTAs}
\label{2WATA-to-NSTA}

\begin{theorem}
\label{thm:2wata-to-nsta}
For each 2WATA $\At$, there is an NSTA $\At'$ such that a node $x$ in a
binary tree $T$ is selected by $\At$ if and only if it is selected by $\At'$.
\end{theorem}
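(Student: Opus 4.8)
The plan is to reuse almost verbatim the strategy/annotation machinery of Section~\ref{sec:2WATA-to-NTA}, observing that the annotation-based acceptance condition is \emph{node-agnostic}, and then to turn the nondeterministic tree automaton $\At_n$ of Theorem~\ref{thm-annotated-trees} into a node-\emph{selecting} automaton by marking as selecting exactly those strategy labels that contain the initial state $s_0$ as a source.

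First I would establish a node-selection analogue of Propositions~\ref{tree1} and~\ref{tree2}: for every node $x\in\Delta^T$, the automaton $\At$ selects $x$ from $T$ if and only if there is a strategy $\tau$ for $\At$ on $T$ together with an accepting annotation $\eta$ of $\tau$ such that $s_0\in\state(\tau(x))$. The point is that the acceptance of an annotation, namely that all cycles visit $\alpha$, does not refer to the root; hence a valid $\tau$ with accepting $\eta$ guarantees that \emph{every} maximal path of the strategy graph, from whatever node and state it starts, is accepting. Thus the root-anchored statements in the excerpt are the special case $x=\eword$, and their proofs go through unchanged for an arbitrary anchor. For the forward direction I would extract $\tau$ and $\eta$ from an accepting run of $\At$ from $x$ as in~\cite{Vard98} (collecting at each node the transitions used by the copies of $\At$ located there, and recording in $\eta$ the finite sub-paths together with whether they pass through $\alpha$), padding with $\tau(y)=\eta(y)=\emptyset$ at nodes untouched by the run and at $\bot$-labeled nodes; weakness forces every infinite path to be trapped in one block $S_i$, so an accepting run makes every cycle of the reachable graph $\alpha$-visiting, i.e.\ $\eta$ accepting, and by construction $s_0\in\state(\tau(x))$. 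For the converse I would unravel the thread starting at $(x,s_0)$ by following $\tau$: strategy condition~(1) makes it a legal run of $\At$, and the accepting annotation makes all its infinite paths accepting.

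Next I would define the NSTA $\At'=(\L,Q,Q_0',\rho,F',\sigma)$ from $\At_n$, keeping the same state space $Q=\Tau\times\Eta$ and the same transition function $\rho$, which already checks that the guessed $\tau$ is a valid strategy and $\eta$ a valid accepting annotation at every node. The only changes are: relax the initial-state set to $Q_0'=\{(r,R)\mid\text{there is no }(s,-1,s')\in r\}$, dropping the requirement $s_0\in\state(r)$ but keeping that the root has no upward edge; set $F'=Q$, so that (as for the NTA) the run-acceptance condition is carried entirely by $\rho$; and declare the selecting set $\sigma=\{(r,R)\mid s_0\in\state(r)\}$. Then an accepting run of $\At'$ on $T$ is precisely a guess of a valid strategy $\tau$ and accepting annotation $\eta$, and a node $x$ is labelled by a selecting state iff $s_0\in\state(\tau(x))$. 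By the node-selection analogue just proved, $\At'$ selects $x$ iff $\At$ selects $x$, giving the claim; the state space is exponential in the number of states of $\At$, matching the blow-up already incurred in Theorem~\ref{thm-annotated-trees}.

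The main obstacle is the first step: making precise that the characterization of~\cite{Vard98}, stated for acceptance (selection of the root), localizes to an arbitrary anchor node. The delicate points are the reconstruction of a \emph{globally} valid strategy from a run whose threads may wander both up and down from $x$ and need not touch every node, handled by the empty-label padding together with a check that the closure conditions on $\eta$ remain vacuously satisfied there, and the verification that "all cycles accepting" certifies acceptance of every thread regardless of where it originates, which is exactly what frees the construction from the root.
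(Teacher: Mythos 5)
Your construction is, in all essential respects, the paper's own proof: the paper likewise localizes the strategy/annotation characterization of \cite{Vard98} by calling a strategy $\tau$ \emph{anchored} at $x_0$ when $s_0\in\state(\tau(x_0))$ (its Claim~\ref{tree3} is exactly your node-selection analogue of Propositions~\ref{tree1} and~\ref{tree2}, noting as you do that the ``all cycles visit $\alpha$'' condition never refers to the root), then keeps the state space $\Tau\times\Eta$ and transition function of the NTA of Theorem~\ref{thm-annotated-trees}, relaxes the initial-state set to $Q_0=\{(r,R)\mid \text{there is no } (s,-1,s')\in r\}$, and declares the selecting states to be $\sigma=\{(r,R)\mid s_0\in\state(r)\}$ --- all verbatim what you propose, with the same exponential blow-up.

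The one point where you deviate is also the one genuine flaw: taking $F'=Q$ does \emph{not} make acceptance ``carried entirely by $\rho$.'' An NSTA's transition function is a set of state pairs, so the NTA's transition value \true has no direct counterpart; acceptance at a leaf is the condition $\delta'(q,a)\cap F'^2\neq\emptyset$, which under $F'=Q$ degenerates to mere nonemptiness of $\delta'(q,a)$. That is strictly weaker than $\rho(q,a)=\true$, i.e., $(\emptyset,\emptyset)\in\rho(q,a,i)$ for all $i$. Concretely, a strategy label $r$ at a leaf that still contains a downward edge $(s,i,s')$ admits successor pairs $(r',R')$ satisfying the local consistency conditions (take $r'$ with $s'$ as a source and no $-1$-edges, and $R'$ a suitable closure; nothing in conditions 1--7 forces $r$ to terminate at a leaf), so $\delta'((r,R),a)$ is nonempty and your automaton accepts even though the encoded 2WATA run demands a nonexistent child. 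This breaks soundness: $\At'$ could select nodes that $\At$ does not. The fix is exactly what the paper does: add a fresh state $\mathit{accept}$, make the transition contain the pair $(\mathit{accept},\mathit{accept})$ precisely when $\rho$ yields \true, and take the accepting set to consist of $\mathit{accept}$. Everything else in your argument --- the anchored-strategy lemma, the empty-label padding of $\tau$ and $\eta$ at untouched nodes, and the relaxed $Q_0$ with selecting set $\sigma$ --- matches the paper and is sound.
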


\begin{proof}
In Section~\ref{sec:2wata}, we described a translation of 2WATA to NTA.  Both
the 2WATA and the NTA start their runs there from the root $\varepsilon$ of the
tree. Here we need the 2WATA to start its run from a node $x_0\in\Delta^T$, on
one hand, and we want the NSTA to select this node $x_0$. Note, however, that
the fact that the 2WATA starts its run from $\varepsilon$ played a very small
in the construction in Section~\ref{sec:2wata}. Namely, we defined the set of
initial states as:
$Q_0=\{(r,R): s_0\in\sta(r) \text{ and there is no transition }
(s,-1,s')\in r\}$.  The requirement that $s_0\in\sta(r)$ corresponds
to the 2WATA starting its run in $\varepsilon$. This captured the requirement
that we have an \emph{anchored} accepting strategy for $T$.

More generally, however, we can say that the strategy $\tau$ is \emph{anchored
 at a node} $x_0\in\Delta^T$ if we have $s_0\in\sta(\tau(x_0))$.

Then we can relax the claims in Section~\ref{sec:2wata}:
\begin{claim}{\rm \cite{Vard98}} \label{tree3}
\begin{enumerate}
\item
A node $x_0$ of $T$ is selected by the 2WATA $\At$
iff $\At$ has an accepting strategy for $T$ that is
anchored at $x_0$.
\item
A node $x_0$ of $T$ is selected by the 2WATA $\At$
iff $\At$ has an strategy for $T$ that is anchored at $x_0$
and an accepting annotation $\eta$ of $\tau$.
\end{enumerate}
\end{claim}

To match this relaxation in the construction of the NSTA, we need to redefine
the set of initial states as $Q_0=\{(r,R): \mbox{there is no transition} (s,-1,s')\in r\}$,
which means that  the requirement that the 2WATA starts its run from the root is dropped,
as the strategy guessed by the NSTA no longer need to be anchored at $\varepsilon$.
Instead, we want the strategy to be anchored at the node $x_0$ selected by $\At$.
To that end, we define the set of selecting states as
$\sigma=\{(r,R)\in \Tau\times\Eta: s_0\in\sta(r)\}$. That is, if $\At$ starts
its run at $x_0$, then the strategy needs to be anchored at $x_0$ and $x_0$ is selected
by the NSTA.

Finally, that the NTA constructed in Section~\ref{sec:2wata} accepts
when the transition function yields the truth value {\bf true}, while NSTA
accepts by means of accepting states. We can simply add a special accepting state
$accept$ and transition to it whenever the transition function yields {\bf true}.
\end{proof}

Finally, we remark that the translation from 2WATA to NSTA is exponential.
Together with the results in the previous sections, we get an exponential
translation from \muXPath to NSTAs.  This explains why NSTAs are not useful for
efficient query-evaluation algorithms, as noted in \cite{Schw07}.

\subsection{From NSTAs to 2WATAs}
\label{sec:NSTA-to-2WATA}

For the translation from NSTAs to 2WATAs, the idea is to take an accepting run
of an NSTA, which starts from the root of the tree, and convert it to a run of
a 2WATA, which starts from a selected node.  The technique is related to the
translation from tree automata to Datalog in \cite{GoKo04}. The construction
here uses the propositions $\ifc$, $\irs$, $\hfc$, and $\hrs$ introduced
earlier.

\begin{theorem}
\label{thm:nsta-to-2wata}
For each NSTA $\At$, there is a 2WATA $\At'$ such that a node $x_0$ in a tree
$T$ is selected by $\At$ if and only if it is selected by $\At'$.
\end{theorem}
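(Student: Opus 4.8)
The plan is to build $\At'$ so that, started at the candidate node $x_0$, it nondeterministically guesses and then verifies an accepting NSTA run $R$ of $\At$ on $T$ in which $x_0$ receives a selecting state. Since the NSTA runs top-down from the root while $\At'$ must start at $x_0$, the verification splits into a \emph{downward} check of the subtree below $x_0$ and an \emph{upward} check of the unique path (the \emph{spine}) from $x_0$ to the root, where at each spine node the off-spine sibling subtree is verified downward. Because $\At$ is run on the full binary tree (with absent children carrying the label $\bot$), I first precompute the set $F_\bot=\{s\in S\mid \delta(s,\bot)\cap F^2\neq\emptyset\}$ of states admissible for an absent child, and use $\hfc$ and $\hrs$ to detect which children are actually present.

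Concretely, $\At'$ has, besides its initial state, a \emph{down}-state $(s,{\downarrow})$ and an \emph{up}-state $(s,{\uparrow})$ for each $s\in S$, together with auxiliary ``moved-up-from-child-$i$'' states. From the initial state at $x_0$ it guesses $s\in\sigma$ and spawns $(0,(s,{\downarrow}))\land(0,(s,{\uparrow}))$. A down-state $(s,{\downarrow})$ at a node with label $a$ guesses a pair $(s_1,s_2)\in\delta(s,a)$ and, for each $i\in\{1,2\}$, either sends $(i,(s_i,{\downarrow}))$ to the present child or, if that child is absent, checks $s_i\in F_\bot$; this exactly simulates the interior/leaf conditions of the NSTA run on the subtree. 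An up-state $(s,{\uparrow})$ at a node returns the formula \true\ if the node is the root (neither $\ifc$ nor $\irs$) and $s\in S_0$; otherwise, reading $\ifc$ or $\irs$ to learn whether the node was child $1$ or child $2$, it moves in direction $-1$ to the parent $y$, where it guesses $y$'s state $p$ and the state $q$ of the off-spine child, checks the NSTA transition at $y$ (placing $s$ in the coordinate matching the spine child and $q$ in the other), continues upward as $(p,{\uparrow})$ at $y$, and verifies the off-spine child by $(q,{\downarrow})$ if present or by $q\in F_\bot$ otherwise. In this way each spine node's transition is checked exactly once, the root lands in $S_0$, and the spine together with the disjoint hanging subtrees covers the whole tree.

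For the acceptance condition, observe that down-moves strictly increase depth and up-moves strictly decrease it, and the two families never feed back into each other; hence on a finite tree every path of every run of $\At'$ is finite. Consequently the weak-acceptance requirement is vacuous: any partition refining these state families, under any partial order respecting the transitions, works, and I may simply take $\alpha=\emptyset$ so that a run is accepting exactly when it exists. This is precisely where working over finite trees, rather than the infinite trees of \cite{Vard98}, pays off.

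Correctness is then a routine induction in both directions. Given an accepting run of $\At'$ from $x_0$, reading off the guessed NSTA state at each tree node yields a well-defined labeling (the spine and the disjoint hanging subtrees partition the nodes, so no node is assigned two states), and the embedded checks force the root label into $S_0$, every transition to hold, every absent child's state into $F_\bot$, and $x_0$'s state into $\sigma$; thus the extracted $R$ witnesses selection by $\At$. Conversely, an accepting NSTA run $R$ with $\ell^R(x_0)\in\sigma$ directly guides a (finite, hence accepting) run of $\At'$ from $x_0$ by taking every guess to be the value prescribed by $R$. I expect the main obstacle to be the upward simulation: ensuring that each spine transition is verified once and with the correct sibling state, and that the downward off-spine verifications glue seamlessly onto the upward spine so that together they reconstruct a single globally consistent run; the $\bot$/absent-child bookkeeping via $F_\bot$, $\hfc$, and $\hrs$ is the other place that demands care.
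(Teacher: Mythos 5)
Your construction is essentially the paper's own: the paper's 2WATA likewise spawns at $x_0$ a conjunctive pair consisting of a down-copy and an up-copy from a guessed selecting state $s\in\sigma$, simulates the NSTA top-down in $d$-tagged states, climbs the spine in $u$-tagged states by guessing the parent's state and the off-spine sibling's state (verified downward via auxiliary $l/r$ states), succeeds at the root iff the guessed state lies in $S_0$, and sets $\alpha'=\emptyset$, justified exactly as you do by the fact that down-states only descend, up-states only ascend, and the $u$-mode is never re-entered, so all run paths are finite. You deviate in two minor, harmless ways. First, the paper guesses the parent's state \emph{and letter} while still at the child, sending three copies in direction $-1$ (including a dedicated letter-state that checks the guessed label at the parent), whereas you move to the parent first, remembering which child you came from, and guess there where the label is readable; this is equivalent and dispenses with the letter-states. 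Second, the paper runs both automata on the $\bot$-padded tree, so an absent sibling is checked by descending into the $\bot$-leaf, where the leaf rule yields \true iff $\delta(t,\bot)\cap F^2\neq\emptyset$ --- which is precisely your $F_\bot$ test performed at the parent. One alignment point deserves care: under the paper's formal definition of NSTA runs, a genuine leaf $x$ (label containing neither \hfc nor \hrs) is tested by $\delta(\ell^R(x),\ell^T(x))\cap F^2\neq\emptyset$ on its \emph{own} label, and the paper's 2WATA mirrors this; your rule at such a node --- guess $(s_1,s_2)\in\delta(s,a)$ and check $s_1,s_2\in F_\bot$ --- implements a different leaf convention (padding leaves with two $\bot$-children) and can disagree with the paper's semantics, e.g., when $\delta(s,a)$ contains a pair in $F^2$ but $\delta(\cdot,\bot)$ is empty. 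Replacing that one clause by ``\true iff $\delta(s,a)\cap F^2\neq\emptyset$'' makes your construction coincide with the paper's; everything else, including the gluing argument that the spine and the disjoint hanging subtrees reconstruct a single global run, matches the paper's proof.
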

\begin{proof}
Let $\At=(\L,S,S_0,\delta,F,\sigma)$ be an NSTA. We construct an equivalent
2WATA $\At'=(\L,S',s'_0,\delta',\alpha')$ as follows (where $s\in S$ and
$a\in\L$):
\begin{itemize}
\item
$S'=\{s_0\}\cup S\times\{u,d,l,r\}\cup \Sigma$. (We add a new initial
state, and we keep four copies, tagged with $u$, $d$, $l$, or $r$ of each
state in $S$. We also add the alphabet to the set of states.)
\item
$\alpha'=\emptyset$. (Infinite branches are not allowed in runs of $\At'$.)
\item
$\delta'(s_0,a)=\bigvee_{s\in \sigma} (((s,d),0) \wedge ((s,u),0))$.
($\At'$ guesses an accepting state of $\At$ and spawns two copies, tagged with
$d$ and $u$, to go downwards and upwards.)
\item
If $a$ does not contain $\hfc$ and does not contain $\hrs$ (that is, we
are reading a leaf node), then $\delta'((s,d),a)=\true$ if $\delta(s,a)\cap
F^2 \neq\emptyset$, and $\delta'((s,d),a)=\false$ if $\delta(s,a)\cap F^2
=\emptyset$.  (In a leaf node, a transition from $(s,d)$ either accepts or
rejects, just like $\At$ from $s$.)
\item
If $a$ contains $\hfc$ or $\hrs$ (that is, we are reading an interior
node), then $\delta'((s,d),a)=\bigvee_{(t_1,t_2)\in\delta(s,a)} ((t_1,d),1)
\wedge ((t_2,d),2)$. (States tagged with $d$ behave just like the
corresponding states of $\At$.)
\item
If $a$ contains neither $\ifc$ nor $\irs$ (that is, we
are reading the root node), then $\delta'((s,u),a)=\true$
if $s\in S_0$, and $\delta'((s,u),a)=\false$, otherwise
(that is, if an upword state reached the root, then we just
need to check that the root has been reached with an initial state),
\item
If $a$ contains $\ifc$ (it is a left child), then
$\delta'((s,u),a)=\bigvee_{t\in S, a'\in\L, (s,t')\in\delta(t,a')}
((t,u),-1) \wedge (a',-1) \wedge ((t',r),-1)$.
(Guess a state and letter in the node above, and proceed to check them.)
\item
If $a$ contains $\irs$ (it is a right child), then $\delta'((s,u),a)=
\bigvee_{t\in S, a'\in\L, (t',s)\in\delta(t,a')}
((t,u),-1) \wedge (a',-1) \wedge ((t',l),-1)$.
(Guess a state and letter in the node above, and proceed to check them.)
\item
$\delta'(a',a)=\true$ if $a'=a$ and $\delta'(a',a)=\false$ if $a'\neq a$.
(Check that the guessed letter was correct.)
\item
$\delta'((s,l),a)=((s,d),1)$. (Check left subtree.)
\item
$\delta'((s,r),a)=((s,d),2)$. (Check right subtree.)
\qed
\end{itemize}
Intuitively, $\At'$ tries to guess an accepting run of $\At$ that selects
$x_0$. It starts at $x_0$ in an accepting state, guesses the subrun below
$x_0$, and also goes up the tree to guess the rest of the run. Note that
we need not worry about cycles in the run of $\At'$, as it only goes upward
in the $u$ mode, and once it leaves the $u$ mode it never enters again the $u$
mode.

We need to show that that a node $x_0$ in a tree $T$ is selected by $\At$
if and only if it is selected by $\At'$. If a node $x_0$ of $T$ is selected
by $\At$, then $\At$ has an accepting run on $T$ that reaches $x_0$ in some
accepting state $s_a\in\sigma$.  Then $\At'$ starts its run at $x_0$ in state
$s_a$, and it proceeds to emulate precisely the accepting run of $\At$. More
precisely, at $x_0$ $\At'$ branches conjunctively to both $(s_a,d)$ and $(s_a,u)$.
From $(s_a,d)$, $At'$ continues downwards and emulate the run of $\At$.
That is, if $\At$ reaches a node $x$ below $t$ in state $s$, then $\At'$
reaches $x$ in state $(s,d)$. At the leaves, $\At$ transitions to accepting
states, and $\At'$ transitions to $\true$. From $(s_a,u)$, $\At'$ first
continues upwards, If a $x$ above $x_0$ is labeled by the letter $a$
and is reached by $\At$ in state $t$, it is reached by $\At'$ from its left
child, and $\At$ reaches its right child in state $t$'.
then $\At'$ reaches $x$ with $((t,u)$, $a$, and $(t',r)$. Then $\At'$
continues the upward emulation from $(t,u)$, verify that $a$ is the letter at $x$,
and also transition to the right child in state $(t',d)$, from which it
continues with the downward emulation of $\At$. The upward emulation
eventually reaches the root in an initiai state.

On the other hand, for $\At'$ to select $x_0$ in $T$ it must start at $x_0$
in some state $s_a\in\sigma$. While $\At'$ is a 2WATA and its run is a
run tree, this run tree has a very specific structure. From a state $(s,d)$,
$\At'$ behaves just like an NTA. From a state $(s,u)$, $\At'$ proceeds upward,
trying to label every node on the path to the root with a single state,
such that the root is labeled by an initial state, and from every node on
that path $At'$ can then go downward, again labeling each node by a single
state. Thus, $At'$ essentially guess an accepting run tree of $\At$ that
selects $x_0$.
\end{proof}

While the translation from 2WATAs to NSTAs was exponential, the translation
from NSTAs to 2WATAs is linear.  It follows from the proof of
Theorem~\ref{thm:nsta-to-2wata} that the automaton $\At'$ correspond to
$\lfp$-\muXPath, which consists of \muXPath queries with a single, least
fixpoint block.  This clarifies the relationship between \muXPath and
Datalog-based languages studied in \cite{GoKo04,FrGK03}. In essence,
\muXPath corresponds to stratified monadic Datalog, where rather than
use explicit negation, we use alternation of least and greatest fixpoints,
while $\lfp$-\muXPath corresponds to monadic Datalog. The results of the last
two sections provide an exponential translation from \muXPath to
$\lfp$-\muXPath. Note, however, that $\lfp$-\muXPath does not have a
computational advantage over \muXPath, for either query evaluation or
query containment. In contrast, while stratified Datalog queries can be
evaluated in polynomial time (in terms of data complexity), there is no good
theory for containment of stratified monadic Datalog queries.

\medskip

The above results provide us a characterization of the expressive power of
\muXPath.

\begin{theorem}\label{thm:muXPath-mso}
  Over (binary) sibling trees, \muXPath and MSO have the same expressive power.
\end{theorem}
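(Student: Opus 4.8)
The plan is to establish expressive equivalence in both directions by composing the back-and-forth translations already proved, using 2WATAs as the intermediary between \muXPath and MSO, with node-selecting tree automata (NSTAs) bridging 2WATAs and MSO.

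For the direction that every MSO query is expressible in \muXPath, I would start from an MSO query $\varphi(x)$ and apply Theorem~\ref{thm:nsta-mso}(i) to obtain an NSTA $\At_\varphi$ selecting precisely the nodes satisfying $\varphi(x)$. Theorem~\ref{thm:nsta-to-2wata} then yields a 2WATA $\At'$ that selects exactly the same nodes, and finally Theorem~\ref{thm:2WATA-to-muXPath} produces a \muXPath query $\pi(\At')$ selecting the same nodes. Chaining these three translations gives a \muXPath query equivalent to $\varphi(x)$ over (binary) sibling trees. Conversely, starting from a \muXPath query $\query$, Theorem~\ref{thm:muXPath-to-2WATA} gives a 2WATA $\At_\query$ selecting $\Int[T]{\query}$, Theorem~\ref{thm:2wata-to-nsta} converts it into an equivalent NSTA, and Theorem~\ref{thm:nsta-mso}(ii) produces an equivalent MSO query. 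The two compositions together yield the claimed equivalence.

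The one point requiring care is the discrepancy in the underlying tree models along the chain. The translations relating 2WATAs and \muXPath are stated over \emph{well-formed} binary trees carrying the auxiliary propositions $\ifc$, $\irs$, $\hfc$, $\hrs$, whereas both MSO and the declarative semantics of \muXPath speak about (binary) sibling trees. I would reconcile this using the encoding maps $\pi_b$ and $\pi_s$ introduced earlier: a sibling tree $T$ and its induced well-formed binary tree $\pi_b(T)$ have identical node sets and agree on the labels in $\Sigma$, and the intermediate automata either operate directly on binary trees (the NSTA side, where $\bot$-labeling handles absent children) or on well-formed binary trees (the 2WATA and \muXPath side). Since each cited theorem already certifies that node selection is preserved under the relevant encoding, the composition preserves the selected node set over sibling trees.

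I do not expect a genuine obstacle here, since the substantive work has been done in the earlier sections; this theorem is essentially a corollary assembled by transitivity of ``selects the same nodes.'' The only thing to state explicitly is that mere expressive equivalence is insensitive to the blow-up in the translations—the 2WATA-to-NSTA step is exponential while NSTA-to-2WATA and the \muXPath$\leftrightarrow$2WATA steps are linear—so for the purpose of Theorem~\ref{thm:muXPath-mso} only the \emph{existence} of the translations matters, and the argument reduces to a routine composition.
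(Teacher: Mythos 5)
Your proposal is correct and takes essentially the same route as the paper, whose proof is exactly this composition: Theorems~\ref{thm:muXPath-to-2WATA} and~\ref{thm:2WATA-to-muXPath} for the \muXPath--2WATA equivalence, combined with Theorems~\ref{thm:nsta-mso}, \ref{thm:2wata-to-nsta}, and~\ref{thm:nsta-to-2wata} for the 2WATA--MSO equivalence. Your explicit reconciliation of sibling trees with well-formed binary trees via $\pi_b$ and $\pi_s$ is a detail the paper leaves implicit, but it does not alter the argument.
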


\begin{proof}
By Theorems~\ref{thm:muXPath-to-2WATA} and~\ref{thm:2WATA-to-muXPath}, \muXPath
is equivalent to WATAs, and by Theorems~\ref{thm:nsta-mso},
\ref{thm:2wata-to-nsta}, and~\ref{thm:nsta-to-2wata}, 2WATAs are equivalent to
MSO.
\end{proof}


\section{Conclusion}
\label{sec:conclusion}


The results of this paper fill a gap in the theory of node-selection queries
for trees.  With a natural extension of \XPath by fixpoint operators, we
obtained \muXPath, which is expressively equivalent to MSO, has linear-time
query evaluation and exponential-time query containment, as \RXPath.  2WATAs,
the automata-theoretic counterpart of \muXPath, fills another gap in the theory
by providing an automaton model that can be used for both query evaluation and
containment testing.  Unlike much of the theory of automata on infinite trees,
which so far has resisted implementation, the automata-theoretic machinery over
finite trees should be much more amenable to practical implementations.
Pursuing this, is a promising future research direction.

Our automata-theoretic approach is based on techniques developed in the context
of program logics~\cite{KuVW00,Vard98}. Here, however, we leverage the fact
that we are dealing with finite trees, rather than the infinite trees used in
the program-logics context. Indeed, the automata-theoretic techniques used in
reasoning about infinite trees are notoriously difficult and have resisted
efficient
implementation~~\cite{FrLa10,ScTW05,TaHB95}. The restiction to finite trees here
enables us to obtain much more feasible algorithmic approach.  In particular,
as pointed out in \cite{CDLV09}, one can make use of symbolic techniques, at
the base of modern model checking tools, for effectively querying and verifying
XML/JSON documents.  It is worth noting that while our automata run over finite
trees they are allowed to have infinite runs.  This separates 2WATAs from the
alternating tree automata used in~\cite{CGKV88,Slut85}.  The key technical
results here are that acceptance of trees by 2WATAs can be decided in linear
time, while nonemptiness of 2WATAs can be decided in exponential time.

While our focus here has been on querying finite unranked trees, recent work
has considered using languages for querying graph databases such as \XPath
\cite{LiMV16,FrLi17} and variants of regular path queries
\cite{Barc13,Vard16,ReRV17}. It'd be interesting to follow this line of
research of querying graph databases with \muXPath.



\bibliographystyle{alpha}
\newcommand{\etalchar}[1]{$^{#1}$}

\end{document}
